\newcommand{\myindent}[1]{
\newline\makebox[#1cm]{}}
\newcommand*{\Let}[2]{\State #1 $\gets$
\parbox[t]{\linewidth-\algorithmicindent-\widthof{ #1 $\gets$}}{#2\strut}}
\newcommand*{\LongState}[1]{\State
\parbox[t]{\linewidth-\algorithmicindent}{#1\strut}}
\newtheorem{lemma}{Lemma}
\newtheorem{theorem}{Theorem}
\newtheorem{claim}{Claim}
\newtheorem{corollary}{Corollary}
\newtheorem{definition}{Definition}
\title{Deterministic Simple $(\Delta+\varepsilon\alpha)$-Edge-Coloring in\\Near-Linear Time}
\author{Michael Elkin\\
        \vspace{-0.15cm}\small Department of Computer Science,\\
        \vspace{-0.15cm}\small Ben-Gurion University of the Negev,\\
        \vspace{-0.1cm}\small Beer-Sheva, Israel.\\
        \href{elkinm@cs.bgu.ac.il}{elkinm@cs.bgu.ac.il}
        \and Ariel Khuzman\\
        \vspace{-0.15cm}\small Department of Computer Science,\\
        \vspace{-0.15cm}\small Ben-Gurion University of the Negev,\\
        \vspace{-0.1cm}\small Beer-Sheva, Israel.\\
        \href{huzmana@post.bgu.ac.il}{huzmana@post.bgu.ac.il}}
\date{}
\begin{document}
\maketitle

\begin{abstract}
We study the edge-coloring problem in simple $n$-vertex $m$-edge graphs with maximum degree $\Delta$. This is one of the most classical and fundamental graph-algorithmic problems. Vizing's celebrated theorem provides $(\Delta+1)$-edge-coloring in $O(m\cdot n)$ deterministic time. This running time was improved to $O\left(m\cdot\min\left\{\Delta\cdot\log n, \sqrt{n}\right\}\right)$~\cite{arjomandi1982efficient, gabow1985algorithms, sinnamon2019fast}, and very recently to randomized $\Tilde{O}\left(m\cdot n^{1/3}\right)~\cite{bhattacharya2024faster}$. It is also well-known that $3\left\lceil\frac{\Delta}{2}\right\rceil$-edge-coloring can be computed in $O(m\cdot\log\Delta)$ time deterministically~\cite{karloff1987efficient}. Duan et al.~\cite{duan2019dynamic} devised a randomized $(1+\varepsilon)\Delta$-edge-coloring algorithm with running time $O\left(m\cdot\frac{\log^6 n}{\varepsilon^2}\right)$. For the case of $\Delta=\left(\frac{\log n}{\varepsilon^4}\right)^{\Omega\left(\varepsilon^{-1}\cdot\log\varepsilon^{-1}\right)}$, Bhattacharya et al.~\cite{bhattacharya2024nibbling} devised a randomized $(1+\varepsilon)\cdot\Delta$-edge-coloring algorithm with running time $O\left(\frac{m\cdot\log\varepsilon^{-1}}{\varepsilon^2}\right)$. It was however open if there exists a deterministic near-linear time algorithm for this basic problem. We devise a simple deterministic $(1+\varepsilon)\Delta$-edge-coloring algorithm with running time $O\left(m\cdot\frac{\log n}{\varepsilon}\right)$. A randomized variant of our algorithm has running time $O(m\cdot(\varepsilon^{-18}+\log(\varepsilon\cdot\Delta)))$.

We also study edge-coloring of graphs with arboricity at most $\alpha$. Christiansen et al.~\cite{christiansen2023sparsity} recently devised a near-linear time algorithm for $(\Delta+2\alpha-2)$-edge-coloring. Bhattacharya et al.~\cite{bhattacharya2023density} devised a randomized $(\Delta+1)$-edge-coloring algorithm with running time  $\Tilde{O}(\min\{m\cdot\sqrt{n},m\cdot\Delta\}\cdot\frac{\alpha}{\Delta})$. Very recently Kowalik~\cite{kowalik2024edge} came up with a deterministic $(\Delta+1)$-edge-coloring algorithm with running time $O\left(m\cdot\alpha^7\cdot\log n\right)$. However, for large values of $\alpha$, the algorithms of~\cite{bhattacharya2023density, kowalik2024edge} require super-linear time. We devise a deterministic $(\Delta+\varepsilon\alpha)$-edge-coloring algorithm, for a parameter $\varepsilon>0$, with running time $O\left(\frac{m\cdot\log n}{\varepsilon^7}\right)$. A randomized version of our algorithm requires $O\left(\frac{m\cdot\log n}{\varepsilon}\right)$ expected time. Hence we show that the surplus in the number of colors can be reduced from $2\alpha$ of~\cite{christiansen2023sparsity} to $\varepsilon\alpha$ for an arbitrarily small $\varepsilon$, while still using near-linear time.

Our algorithm is based on a novel \emph{two-way degree-splitting}, which we devise in this paper. We believe that this technique is of independent interest.
\end{abstract}

\section{Introduction}
\subsection{General Graphs}
Given an undirected $n$-vertex $m$-edge graph, $G=(V,E)$, a function $\varphi:E\rightarrow [k]$, for a positive integer $k$, is called a \emph{proper $k$-edge-coloring} of $G$ if for any pair of incident distinct edges $e\neq e'$, we have $\varphi(e)\neq \varphi(e')$.
The smallest value of $k$ for which a proper $k$-edge-coloring of $G$ exists is called the \emph{chromatic index} of $G$, and is denoted by $\chi'(G)$.
Edge-coloring is a basic graph-theoretic primitive, and it is a staple in any course on Graph Theory. See, e.g., the monographs~\cite{bollobas1998modern},~\cite{jensen2011graph}, and the references therein. It was also extensively studied from algorithmic perspective~\cite{gabow1976using, gabow1982algorithms, arjomandi1982efficient, gabow1985algorithms, cole1982edge, karloff1987efficient, schrijver1998bipartite, cole2001edge, alon2003simple, sinnamon2019fast, duan2019dynamic, bernshteyn2023fast}.
There are also numerous applications of edge-coloring in networking and scheduling~\cite{williamson1997short, erlebach2001complexity, aggarwal2003switch, kodialam2003characterizing, cheng2007transmission, gandham2008link}.
Obviously, $\chi'(G)\geq\Delta$, where $\Delta=\Delta(G)$ is the maximum degree. A celebrated result by Vizing~\cite{vizing1964estimate} is that for any simple graph $G$, we have $\chi'(G)\leq\Delta+1$. The proof of Vizing is constructive, and gives rise to a deterministic $(\Delta+1)$-edge-coloring algorithm with running time $O(m\cdot n)$~\cite{misra1992constructive}. (See also~\cite{bollobas2012graph}, page 94.) This was improved to $O(m\cdot\min\{\Delta\cdot\log n, \sqrt{n\cdot\log n}\})$ by Arjomandi~\cite{arjomandi1982efficient} and Gabow et al.~\cite{gabow1985algorithms}, and further refined to $O(m\cdot\min\{\Delta\cdot\log n, \sqrt{n}\})$ by Sinnamon~\cite{sinnamon2019fast}. Bernshteyn and Dhawan~\cite{bernshteyn2023fast} recently devised a randomized algorithm for the problem with running time $O(n\cdot\Delta^{18})$, and Assadi~\cite{assadi2024faster} devised a randomized algorithm for this task that requires $O\left(n^2\cdot\log n\right)$ time. 
While these algorithms are efficient when $\Delta$ is reasonably small, this is no longer the case for larger values of $\Delta$. This led researchers to study algorithms that use more colors, but run in \emph{near-linear time}\footnote{By "near-linear time" we mean here $\Tilde{O}(m)$, where $\Tilde{O}(f(m))=O\left(f(m)\cdot\mathrm{poly}(\log(f(m)))\right)$.}. There are a number of simple deterministic $(2\Delta-1)$-edge-coloring algorithms known that require $O(m\cdot\log\Delta)$ time~\cite{bhattacharya2018dynamic, sinnamon2019fast}. It is also not hard to achieve a $3\left\lceil\frac{\Delta}{2}\right\rceil$-edge-coloring within the same time~\cite{karloff1987efficient, cole2001edge, duan2019dynamic}.
The bound $3\left\lceil\frac{\Delta}{2}\right\rceil$ is quite close to Shannon's bound of $\left\lfloor\frac{3\Delta}{2}\right\rfloor$~\cite{shannon1949theorem} for edge-coloring multigraphs. However, it is typically insufficient for \emph{simple} graphs, which are the focus of this paper.\\
Duan et al.~\cite{duan2019dynamic} devised the first near-linear time \emph{randomized} $(1+\varepsilon)\Delta$-edge-coloring algorithm, for any arbitrarily small constant $\varepsilon>0$. Specifically, assuming $\Delta=\Omega\left(\frac{\log n}{\varepsilon}\right)$, their algorithm requires $O\left(m\cdot\frac{\log^6 n}{\varepsilon^2}\right)$ time. For $\Delta=O\left(\frac{\log n}{\varepsilon}\right)$ one can achieve a deterministic near-linear time via the algorithm of Gabow et al.~\cite{gabow1985algorithms}, which requires $O(m\cdot\Delta\cdot\log n)=O\left(m\cdot\frac{\log^2 n}{\varepsilon}\right)$ time. Bhattacharya et al.~\cite{bhattacharya2024nibbling} further improved their bound for the case $\Delta\geq\Delta^*$, where $\Delta^*=\left(\frac{\log n}{\varepsilon^4}\right)^{\Theta\left(\varepsilon^{-1}\cdot\log\varepsilon^{-1}\right)}$. Specifically, their randomized algorithm computes $(1+\varepsilon)\Delta$-edge-coloring in this case in time $O\left(\frac{m\cdot\log\frac{1}{\varepsilon}}{\varepsilon^2}\right)$. This leaves open the question of whether there exists a \emph{deterministic} near-linear time algorithm for $(1+\varepsilon)\Delta$-edge-coloring problem, and also, whether the relatively high exponents of $\log n$ and of $\frac{1}{\varepsilon}$ in the running time of~\cite{duan2019dynamic} can be improved.\\
We answer these questions in the affirmative. First, we devise a simple deterministic $(1+\varepsilon)\Delta$-edge-coloring algorithm with running time $O\left(m\cdot\frac{\log n}{\varepsilon}\right)$. Second, a randomized variant of our algorithm employs the same number of colors, and requires $O\left(m\cdot\varepsilon^{-18}+m\cdot\log(\varepsilon\cdot\Delta)\right)$ time. See Table \ref{table: edge-coloring algorithms} for a concise summary of existing and new edge-coloring algorithms.

\begin{table}[h!]
    \begin{center}
    \addtolength{\leftskip} {-2cm}
    \addtolength{\rightskip}{-2cm}
    \begin{tabular}{|c | c | c | c|} 
     \hline
     Reference & $\#$ Colors & Running Time & Deterministic or Randomized \\ [0.5ex] 
     \hline
     \cite{bhattacharya2018dynamic},\cite{sinnamon2019fast} & $2\Delta-1$ & $O(m\cdot\log\Delta)$ & Deterministic \\ 
     \hline
     \cite{cole2001edge}+Upfal's reduction~\cite{karloff1987efficient} & $3\left\lceil\frac{\Delta}{2}\right\rceil$ & $O(m\cdot\log\Delta)$ & Deterministic \\
     \hline
     \cite{duan2019dynamic}+\cite{gabow1985algorithms} & $(1+\varepsilon)\Delta$ & $O\left(m\cdot\frac{\log^6 n}{\varepsilon^2}\right)$ & Randomized\\
     \hline
     \cite{bhattacharya2024nibbling} & $(1+\varepsilon)\Delta$ & $O\left(m\cdot\frac{\log \varepsilon^{-1}}{\varepsilon^2}\right)$ & Randomized$^{(*)}$ \\
     \hline
     \textbf{New} & $(1+\varepsilon)\Delta$ & $O\left(m\cdot\frac{\log n}{\varepsilon}\right)$ & Deterministic \\
     \hline
     \textbf{New} & $(1+\varepsilon)\Delta$ & $O\left(m\cdot\varepsilon^{-18}+m\cdot\log(\varepsilon\cdot\Delta)\right)$ & Randomized \\ [1ex] 
     \hline
    \end{tabular}
    \end{center}
    \caption{A concise summary of existing and new edge-coloring algorithms for general simple graphs that employs $(1+\varepsilon)\Delta$ colors or more.\\
    (*) The result of~\cite{bhattacharya2023density} is applicable only for $\Delta=\left(\frac{\log n}{\varepsilon^4}\right)^{\Omega\left(\varepsilon^{-1}\log\varepsilon^{-1}\right)}$.
    }
    \label{table: edge-coloring algorithms}
\end{table}

We note also that it is NP-hard to distinguish between graphs $G$ with $\chi'(G)=\Delta$ and $\chi'(G)=\Delta+1$~\cite{holyer1981np}. Our result can therefore be viewed as a deterministic near-linear time FPTAS for the edge-coloring problem.

\subsection{Graphs with Bounded Arboricity}
Edge-coloring graphs with bounded arboricity is being intensively studied in recent years~\cite{bhattacharya2023arboricity,bhattacharya2023density,christiansen2023sparsity,kowalik2024edge}. Given a graph $G=(V,E)$, its \emph{arboricity} $\alpha=\alpha(G)=\max_{{{S\subseteq V,}\atop |S|\geq 2}}\left\lceil\frac{|E(S)|}{|V(S)|-1}\right\rceil$. The family of graphs with bounded arboricity is quite a rich graph-family: in particular, it contains graphs with bounded genus, graphs that exclude any fixed minor, and graphs with bounded tree-width. Bhattacharya et al.~\cite{bhattacharya2023density} devised a randomized algorithm that computes a $(\Delta+1)$-edge-coloring in $\Tilde{O}\left(\min\{m\sqrt{n},m\cdot\Delta\}\cdot\frac{\alpha}{\Delta}\right)$ time, improving the bounds of~\cite{gabow1985algorithms, sinnamon2019fast} wherever $\alpha=o(\Delta)$. Christiansen et al.~\cite{christiansen2023sparsity} devised a deterministic $(\Delta+2\alpha-2)$-edge-coloring algorithm with running time $O(m\cdot\log\Delta)$. (Their algorithm can, in fact, be viewed as an adaptation of distributed $(\Delta+O(\alpha))$-edge-coloring algorithm due to Barenboim et al.~\cite{barenboim2017deterministic} to sequential setting, in which each $H$-set of the $H$-decomposition is just a single vertex.) Recently, Kowalik~\cite{kowalik2024edge} devised a deterministic $(\Delta+1)$-edge-coloring algorithm with running time $O(m\cdot\alpha^7\cdot\log n)$. Also, under a very mild condition on $\Delta$ and $\alpha$ ($\Delta\geq 4\alpha$ is stronger than the condition in~\cite{kowalik2024edge}), Kowalik~\cite{kowalik2024edge} devised a deterministic $\Delta$-edge-coloring algorithm with the same running time, and a randomized $\Delta$-edge-coloring algorithm with running time $O(m\cdot\alpha^3\cdot\log n)$. (Some condition on $\Delta$ and $\alpha$ is necessary for $\chi'(G)=\Delta$; see~\cite{kowalik2024edge} for more details, and for an excellent survey of further relevant literature.)

We note, however, that for large values of $\alpha$, the algorithms of~\cite{bhattacharya2023density} and~\cite{kowalik2024edge} require super-linear time. On the other hand, the algorithm of~\cite{christiansen2023sparsity} uses near-linear time, but employs $\Delta+2\alpha-2$ colors, i.e., the surplus in the number of colors (with respect to the desired bound of $\Delta$ or $\Delta+1$) is $\approx 2\alpha$. We show that this surplus can be significantly decreased, while still using deterministic near-linear time. Specifically, for any parameter $\varepsilon>0$, our deterministic $(\Delta+\varepsilon\alpha)$-edge-coloring algorithm requires $O\left(\frac{m\cdot\log n}{\varepsilon^7}\right)$ time. Moreover, a randomized variant of this algorithm requires expected $O\left(\frac{m\cdot\log n}{\varepsilon}\right)$ time. See Table \ref{table: arboricity edge-coloring algorithms} for a concise summary of the most relevant existing and new edge-coloring algorithms for graphs with bounded arboricity.

\begin{table}[h!]
    \begin{center}
    \addtolength{\leftskip} {-2cm}
    \addtolength{\rightskip}{-2cm}
    \begin{tabular}{|c | c | c | c|} 
     \hline
     Reference & $\#$ Colors & Running Time & Deterministic or Randomized \\ [0.5ex] 
     \hline
     \cite{kowalik2024edge} & $\Delta$ & $O(m\cdot\alpha^7\log n)$ & Deterministic$^{(**)}$ \\ 
     \hline
     \cite{kowalik2024edge} & $\Delta$ & $O(m\cdot\alpha^3\log n)$ & Randomized$^{(**)}$ \\ 
     \hline
     \cite{bhattacharya2023density} & $\Delta+1$ & $\Tilde{O}(\min\{m\sqrt{n},m\cdot\Delta\}\cdot\frac{\alpha}{\Delta})$ & Randomized \\
     \hline
     \cite{kowalik2024edge} & $\Delta+1$ & $O\left(m\cdot\alpha^7\cdot\log n\right)$ & Deterministic\\
     \hline
     \cite{christiansen2023sparsity},\cite{barenboim2017deterministic} & $\Delta+2\alpha-2$ & $O\left(m\cdot\log\Delta\right)$ & Deterministic \\
     \hline
     \textbf{New} & $\Delta+\varepsilon\alpha$ & $O\left(m\cdot\frac{\log n}{\varepsilon^7}\right)$ & Deterministic \\
     \hline
     \textbf{New} & $\Delta+\varepsilon\alpha$ & $O\left(m\cdot\frac{\log n}{\varepsilon}\right)$ & Randomized \\ [1ex] 
     \hline
    \end{tabular}
    \end{center}
    \caption{A concise summary of most relevant existing and new results on edge-coloring graphs with arboricity at most $\alpha$.\\
    (**) The result holds if $\Delta\geq 4\alpha$.
    See~\cite{kowalik2024edge} for more details, and for additional results about this problem.
    }
    \label{table: arboricity edge-coloring algorithms}
\end{table}

\subsection{Technical Overview}
Many known $(\Delta+1)$-edge-coloring algorithms~\cite{gabow1976using, arjomandi1982efficient, gabow1985algorithms, sinnamon2019fast} are based on \emph{degree-splitting}. A degree-splitting $(E_1,E_2)$ of a graph $G=(V,E)$ is a partition of the edge set $E$ of $G$ into blue ($E_1$) and red ($E_2$) edges, in such a way that for every vertex $v\in V$, the number of blue edges incident to it is almost equal to the number of red edges that are incident to it. Such degree-splittings can be efficiently computed in near-linear time. The algorithms of~\cite{arjomandi1982efficient, gabow1985algorithms, sinnamon2019fast} compute such a degree-splitting, recurse on $E_1$ and $E_2$, and then fix the resulting coloring, so that it will use just $\Delta+1$ colors.

Our first observation is that if one just omits the fixing step, the algorithm still produces $(1+\varepsilon)\Delta$-edge-coloring, and does so in near-linear time.

For graphs with bounded arboricity, the algorithm of ~\cite{christiansen2023sparsity} (based on~\cite{barenboim2017deterministic}) orders the vertices $v_1,v_2,...,v_n$ in the degeneracy-order, i.e., so that for every vertex $v_i$, $i\in\{1,2,...,n\}$, the number of neighbors of $v_j$ with $j>i$ that it has is at most $d$, where $d$ is the degeneracy of $G$. (The \emph{degeneracy} $d$ of $G$ is the minimum number such that one can order vertices as above, where the number of higher-index neighbors is at most $d$. It is well-known that $d\leq2\alpha-1$.) The algorithm then processes the vertices one after another, starting with $v_n$ and ending with $v_1$. For every vertex $v_i$ it colors its incident edges $(v_i,v_j)$, $j>i$, using the palette $1,2,...,\Delta+d-1\leq\Delta+2\alpha-2$.

To devise $(\Delta+\varepsilon\alpha)$-edge-coloring in near-linear time, we develop a \emph{two-way degree-splitting}. Specifically, consider a graph $G=(V,E)$ with an orientation $\mu$ on all edges. We show that one can efficiently partition its edge set into blue and red edges $(E_1,E_2)$, so that for every vertex $v$, not only the \emph{total} numbers of blue and red edges incident on $v$ are almost equal, but also the numbers of outgoing (under $\mu$) blue and red edges incident on $v$ are almost equal as well. (This is, therefore, obviously the case for incoming edges too.)

Given an efficient procedure for computing a two-way degree-splitting, one can compute a $(\Delta+\varepsilon\alpha)$-edge-coloring by a similar recursion to the one that we use for $(1+\varepsilon)\Delta$-edge-coloring. We start with computing a forests-decomposition of the input graph into $2\alpha$ forests, and it induces an orientation $\mu$ with maximum outdegree at most $2\alpha$~\cite{barenboim2008sublogarithmic}. (In fact, one can use here a more recent forests-decomposition of~\cite{blumenstock2020constructive} into $(1+\varepsilon)\alpha$ forests, for an arbitrarily small $\varepsilon>0$.) We then compute the two-way degree-splitting $(E_1,E_2)$, with respect to the orientation $\mu$, recurse on both parts, and continue the recursion up to the stage where the input subgraphs have small arboricities. On these subgraphs we invoke a $(\Delta+1)$-edge-coloring algorithm for graphs of small arboricity~\cite{bhattacharya2023density, kowalik2024edge}. The running time of these algorithms grows with the arboricity of their input subgraphs, but the latter is guaranteed to be small. As a result, the total running time of invoking these algorithms on each of the (edge-disjoint) small-arboricity subgraphs produced by our algorithm is near-linear in the number of edges $|E|$ of the original input graph $G=(V,E)$. The depth of the recursion is roughly $\log\alpha$, and thus the overall time required to compute all the two-way degree-splittings throughout the recursion is also near-linear in $|E|$. Moreover, note that in the algorithms of ~\cite{arjomandi1982efficient, gabow1985algorithms,sinnamon2019fast} and in our $(1+\varepsilon)\Delta$-edge-coloring algorithm, the recursion depth is roughly $\log\Delta$, and as a result, the surplus in the number of employed colors ($\varepsilon\Delta$) is linear in $\Delta$. On the other hand, when we use two-way degree-splitting, the recursion depth is just $\approx\log\alpha$, and as a result the number of extra colors employed by our algorithm is just $\varepsilon\alpha$. We remark that unlike the algorithms of~\cite{gabow1985algorithms,sinnamon2019fast}, our algorithms (both the one that employs an ordinary degree-splitting and the one that employs the novel two-way degree-splitting) do the computationally-expensive edge-recoloring step (sometimes it is also called "fixing step") only on the bottom level level of the recursion.

The heart of our algorithm is, therefore, the computation of the two-way degree-splitting. Here we build upon classical works of Israeli and Shiloach~\cite{israeli1986improved} and Liang et al.~\cite{liang1996parallel} on ordinary degree-splitting. (We remark that Gabow et al.~\cite{gabow1976using,gabow1985algorithms}, Arjomandi~\cite{arjomandi1982efficient} and Karloff and Shmoys~\cite{karloff1987efficient} also studied ordinary (one-way) degree-splittings. Those, however, are based on Eulerian cycles, while our method is closer to that of~\cite{israeli1986improved,liang1996parallel}.)
The algorithm of~\cite{israeli1986improved} (for parallel maximal matching) repeatedly extracts maximal paths from the input graph, and colors them alternately by red and blue. (Intuitively, a path is \emph{maximal} if it cannot be extended.)
We introduce the notion of \emph{alternating-directions} (or AD) paths, as paths $(v_0,e_1,v_1,e_2,v_2,...,e_k,v_k)$, in which (essentially) all pairs of consecutive edges $e_i,e_{i+1}$ are oriented (under a given edge-orientation $\mu$) in opposite directions. See Figure \ref{fig:alternating-directions path} for an illustration.

Our algorithm has a similar structure to that of~\cite{israeli1986improved}, as it repeatedly extracts maximal AD paths and colors them alternately. Some discrepancy (between the number of red and blue edges incident on a given vertex $v$) is induced only by endpoints of these paths. However, their maximality implies that no vertex can serve as an endpoint of too many such paths.

We believe that the two-way degree-splittings and AD paths that we introduce in this paper are of independent interest, and will be found useful for efficient computation of edge-coloring and maximal matchings in graphs of bounded arboricity in various computational settings. In particular, we consider the dynamic edge-coloring problem to be a likely candidate to benefit from this new technique.

\subsection{Related Work}

Algorithms for $\Delta$-edge-coloring bipartite graphs were provided in~\cite{gabow1976using, cole1982edge, gabow1982algorithms, schrijver1998bipartite, cole2001edge, alon2003simple}. The problem of efficient edge-coloring of multigraphs was studied in~\cite{hochbaum1986better, nishizeki19901, sanders2005asymptotic}. Efficient algorithms for edge-coloring planar graphs were devised in~\cite{chrobak1990improved, cole2008new}. Edge-coloring of pseudo-random graphs (i.e., graphs with a bounded second-largest eigenvalue  of their adjacency matrix) was studied in~\cite{ferber2020towards}. The edge-coloring problem was also extensively studied in the context of distributed algorithms~\cite{barenboim2011distributed, barenboim2013basic, elkin20142delta, barenboim2017deterministic, balliu2022distributed, ghaffari2020improved, balliu2020distributed, bernshteyn2022fast}, parallel algorithms~\cite{karloff1987efficient, chrobak1989fast, liang1995fast, liang1996parallel, grable1997nearly} and online algorithms~\cite{bahmani2012online, bhattacharya2021online, kulkarni2022online, bar1992greedy}.

Ordinary (one-way) degree-splittings in distributed model were studied in~\cite{hanckowiak2001distributed,barenboim2017deterministic,ghaffari2018deterministic}. The work of Barenboim et al.~\cite{barenboim2017deterministic} is probably the closest in spirit to our work, as it considers distributed edge-coloring graphs with arboricity $\alpha$ using $\Delta+O(\alpha)$ and more colors. While their basic algorithm (that builds ($\Delta+O(\alpha)$)-edge-coloring) does not employ degree-splitting, they also devised a sequence of (distributed) algorithms that use $\Delta+O\left(\sqrt{\Delta\alpha}\right)$, $\Delta+O\left(\Delta^{\frac{2}{3}}\alpha^{\frac{1}{2}}\right)$,... colors, and these algorithms implicitly employ two-way degree-splittings. These splittings are based on defective edge-colorings (see~\cite{barenboim2011distributed, barenboim2013basic}), and result in large discrepancies. As a result, to the best of our understanding, they cannot be used for edge-coloring with surplus smaller than $2\alpha-O(1)$.

\subsection{Structure of the Paper}
Section \ref{sec: Efficient (1+varepsilon)Delta-Edge-Coloring} contains our degree-dependent edge-coloring algorithm. It can be viewed as a warm-up to our main arboricity-dependent algorithm, provided in Section \ref{sec: Arboricity-Dependent Edge-Coloring}. As a first part, in Section \ref{sec: The Oriented Degree-Splitting Problem}, we present the oriented degree-splitting problem, and provide an algorithm for computing a two-way (or oriented) degree-splitting. We use this algorithm in Section \ref{sec: An Arboricity-Dependent Edge-Coloring Algorithm}, where we describe the main arboricity-dependent edge-coloring. In Appendix \ref{app: Forests-Decomposition Orientation} we provide a well-known algorithm for computing a forests decomposition orientation, that is used in the arboricity-dependent edge-coloring algorithm. Some proofs from Section \ref{sec: Efficient (1+varepsilon)Delta-Edge-Coloring} can be found in Appendix \ref{App: Some Proofs from Section 3}.

\section{Preliminaries}\label{sec preliminaries}

Let $G=(V,E)$ be a graph. For a vertex $v\in V$, denote the degree of $v$ in $G$ by $\deg_G(v)$.\\
We denote $|V|=n$, $|E|=m$ and the maximum degree of $G$ by $\Delta(G)=\max_{v\in V}\deg(v)$ (or $\Delta$, for short).\\
For a directed graph, we denote by $\text{indeg}(v)$ the incoming degree of a vertex $v\in V$, that is, $\text{indeg}(v)=\left|\left\{\langle u,v\rangle\in E\right\}\right|$. We also denote by $\text{outdeg}(v)$ the outgoing degree of a vertex $v\in V$, that is, $\text{outdeg}(v)=\left|\left\{\langle v,u\rangle\in E\right\}\right|$.
\begin{definition}[Eulerian graph]
    A graph $G=(V,E)$ is called Eulerian graph if and only if all its vertices have even degrees.
\end{definition}

\begin{definition}[Adjacent edges]
    Given a graph $G=(V,E)$, we say that two edges $e,e'\in E$ are adjacent, or neighbors of each other, if $e\neq f$ and they share an endpoint.
\end{definition}

\begin{definition}[Proper edge-coloring]
    A proper $k$-edge-coloring of a graph $G=(V,E)$ is a map $\varphi:E\rightarrow\{1,2,...,k\}$, such that $\varphi(e)\neq\varphi(e')$ for every pair of adjacent edges $e,e'$.
\end{definition}

We will assume without loss of generality that there are no isolated vertices in the graph, i.e, $m\geq \frac{n}{2}$. Otherwise, we can omit the isolated vertices in the graph in each level of recursion.

\section{Efficient $(1+\varepsilon)\Delta$-Edge-Coloring}\label{sec: Efficient (1+varepsilon)Delta-Edge-Coloring}

In this section we present our edge-coloring algorithm, and analyse it.\\
Before describing the algorithm, we first state two results for $(\Delta+1)$-edge-coloring in deterministic and randomized settings, which we use in blackbox manner.

\begin{theorem}[$(\Delta+1)$-edge-coloring]\label{Delta+1 col}
\begin{enumerate}
    \item[(1)] ~\cite{gabow1985algorithms} There is a deterministic algorithm that given a graph $G$ with $n$ vertices, $m$ edges, and maximum degree $\Delta$, outputs a proper $(\Delta+1)$-edge-coloring of $G$ in $O(m\cdot\Delta\cdot\log n)$ time.
    \item[(2)] ~\cite{bernshteyn2023fast} There is a randomized algorithm that given a graph $G$ with $n$ vertices and maximum degree $\Delta$, outputs a proper $(\Delta+1)$-edge-coloring of $G$ in $O\left(\Delta^{18}\cdot n\right)$ time with probability at least $1-\frac{1}{\Delta^n}$.
\end{enumerate}
\end{theorem}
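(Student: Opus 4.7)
Both parts of Theorem~\ref{Delta+1 col} are restatements of previously published algorithms, so the plan is to invoke each as a black box and briefly recall the underlying ideas for completeness.

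For part (1), I would defer to the main theorem of Gabow et al.~\cite{gabow1985algorithms}, whose algorithm is built on Vizing's original constructive proof. In that proof, each uncolored edge $(u,v)$ is handled by constructing a ``Vizing fan'' of colors around $u$ together with an alternating color path, and then shifting colors along these structures so that $(u,v)$ can be assigned a color from a palette of size $\Delta+1$. A naive implementation costs $O(n)$ per edge, giving $O(mn)$ overall. Gabow et al.\ reduce the amortized cost per edge to $O(\Delta \log n)$ by partitioning the edges via an Eulerian tour, recursing on the two halves of the partition, and then merging the two partial colorings with a bounded number of fan-and-path repairs per edge; the extra $\log n$ factor reflects the cost of the data structure maintaining alternating paths. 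Tracing the $O(\log \Delta)$ levels of this recursion and the $O(\Delta)$ cost of each fan-and-path operation yields the stated $O(m \cdot \Delta \cdot \log n)$ bound.

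For part (2), I would invoke the main theorem of Bernshteyn and Dhawan~\cite{bernshteyn2023fast}. Their algorithm generalizes Vizing's single fan-plus-path gadget to a ``multi-step augmenting subgraph'' and finds one via a bounded random local search. The key probabilistic statement is that, for each uncolored edge, an appropriate augmenting subgraph of size $\mathrm{poly}(\Delta)$ exists with high probability and can be located in $O(\Delta^{18})$ time. A union bound over all $O(n\Delta)$ edges and a concentration argument for the total work yield the claimed $O(\Delta^{18} \cdot n)$ running time with probability at least $1 - 1/\Delta^n$.

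The main subtlety, which does not affect the theorem itself but will matter when it is applied inside our recursion, is that the success probability $1 - 1/\Delta^n$ in part (2) degrades when $n$ is small, so part (2) cannot be invoked indiscriminately at the bottom levels of the recursion. This is precisely the issue flagged in the technical overview, to be resolved by a shattering-style dichotomy: part (2) is used only on sufficiently large subgraphs, while part (1) is used on the small ones, where the $\log n'$ overhead is absorbed by the recursion depth.
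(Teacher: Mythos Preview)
Your proposal is correct and matches the paper's treatment: Theorem~\ref{Delta+1 col} is stated in the paper purely as a summary of the cited results of~\cite{gabow1985algorithms} and~\cite{bernshteyn2023fast}, with no proof given, so invoking each as a black box is exactly what is done there. Your additional sketches of the underlying mechanisms (Vizing fans with Euler-partition recursion for part~(1), multi-step augmenting subgraphs via random local search for part~(2)) and your remark about the degradation of the success probability for small $n$ go beyond what the paper includes at this point, but they are accurate and anticipate precisely the issue the paper later handles in Claim~\ref{Proper edge-coloring - randomized sequential}.
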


\subsection{The Algorithm}
In order to compute the edge-coloring in our algorithms we use a degree-splitting method. We start by presenting the Degree-Splitting problem.\\

\textbf{The Degree-Splitting Problem.}
The undirected degree-splitting problem seeks a partitioning of the graph edges $E$ into two parts so that the partition looks almost balanced around each vertex. Concretely, we need to color each edge red or blue such that for each node, the difference between the number of red and blue edges incident on it is at most some small discrepancy threshold value $\kappa$. In other words, we want an assignment $q:E \rightarrow \{1, -1\}$ such that for each node $v\in V$, we have $$\left|\sum_{e\in E_v} q(e)\right|\leq \kappa,$$
where $E_v$ denotes the set of edges incident on $v$.\\

Before we describe our edge-coloring algorithm, we present an algorithm that computes a degree-splitting with discrepancy at most 2 that we will use in our main edge-coloring algorithm. The algorithm is a special case of a result due to Beck and Fiala~\cite{beck1981integer}, who showed that any hypergraph of rank $t$ (each
hyperedge has at most $t$ vertices) admits an edge-coloring in two colors with discrepancy at most $2t-2$. The degree-splitting algorithm is also used in previous works~\cite{israeli1986improved, ghaffari2020improved} in $\mathrm{PRAM}$ and distributed settings.
The idea of the algorithm for the case of simple graphs ($t=2$) is as follows:\\
For now we will assume that the input graph has an even number of edges. For a connected input graph $G=(V,E)$ with an even number of edges, we add a dummy node to $G$, and connect it to all the odd degree vertices of the graph to obtain a graph $G'$. In this way all the degrees of $G'$ are even, and it is possible to compute an Eulerian cycle in this graph efficiently. Observe that this cycle is of even length. So we can color the edges of this cycle in two alternating colors and then by deleting the dummy node and its incident edges we will receive the desired coloring of the edges of $G$ with discrepancy at most 1. 
In the case that the input graph has an odd number of edges, we will analyze this algorithm more carefully and achieve a coloring of the edges of $G$ with discrepancy at most 2.

The description of this algorithm is given in Theorem \ref{Procedure Degree-Splitting}. See also Figure \ref{fig:deg-splitting} for an illustration of the degree-splitting procedure. Its proof is quite standard. For completeness we provide it in Appendix \ref{App: Some Proofs from Section 3}.

\begin{theorem}[Procedure \textsc{Degree-Splitting}]\label{Procedure Degree-Splitting}
    Let $G=(V,E)$ be a graph with $m$ edges. A degree-splitting of $G$ with discrepancy $\kappa=2$ can be computed in $O(m)$ time. Moreover, there is almost the same amount of blue and red edges. Namely, there are $\left\lfloor\frac{m}{2}\right\rfloor$ edges that are colored using one of the colors, and $\left\lceil\frac{m}{2}\right\rceil$ edges that are colored using the second color.
\end{theorem}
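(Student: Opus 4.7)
The plan is to handle $G$ component by component. For a connected component $G_C=(V_C,E_C)$ with $m_C$ edges I first reduce to the Eulerian case, then use an Eulerian circuit to produce an almost-alternating coloring, and finally strip off the auxiliary gadget.

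Concretely, if $G_C$ has some odd-degree vertices (their number $o_C$ is even by the handshake lemma), I introduce a dummy vertex $v^*$ joined by one edge to each odd-degree vertex of $G_C$. All degrees in the resulting $G_C'$ are even, so I compute an Eulerian circuit using Hierholzer's algorithm in $O(m_C+o_C)=O(m_C)$ time, starting at $v^*$. I color the edges of the circuit alternately red and blue and then delete $v^*$ together with its incident edges. If $G_C$ already has no odd-degree vertex, I skip the augmentation and run Hierholzer from an arbitrary starting vertex $s$.

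The discrepancy bound follows from a single invariant: whenever the circuit passes through a vertex that is neither the first nor the last vertex of the circuit, the two incident edges used in that pass are consecutive along the circuit and therefore receive opposite colors. So in the augmented graph every non-starting vertex has perfectly balanced red/blue counts. Starting the circuit at $v^*$ when $o_C>0$ confines the one possible ``wraparound'' imbalance (which occurs when the circuit has odd length) to $v^*$ itself, whose edges will be discarded anyway. After deleting $v^*$, every originally-odd-degree vertex loses exactly one edge and thus acquires discrepancy $1$, while even-degree vertices keep discrepancy $0$. In the $o_C=0$ case only the starting vertex $s$ can have discrepancy $2$, and only when $m_C$ is odd. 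Either way $\kappa\le 2$ in every component, hence in $G$.

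For the color-class counts, the alternating coloring of a circuit of length $L$ produces $\lfloor L/2\rfloor$ edges of one color and $\lceil L/2\rceil$ of the other. A short case analysis (using that each middle visit at $v^*$ contributes one edge of each color, while the two endpoint positions at $v^*$ both lie in the majority class when $L$ is odd) shows that after deleting $v^*$'s $o_C$ edges the component contributes exactly $\lfloor m_C/2\rfloor$ and $\lceil m_C/2\rceil$ edges to the two classes, with a free choice of which color receives the larger share. Summing over components, the parity of the number of odd-$m_C$ components matches the parity of $m$, so I can pair them up and globally swap colors in half of them to obtain the required split into $\lfloor m/2\rfloor$ and $\lceil m/2\rceil$. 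The main obstacle I anticipate is the bookkeeping for the odd-circuit-length case: without starting the circuit at $v^*$ an originally-odd-degree starting vertex could in principle end up with discrepancy $3$, and it is precisely to rule this out that the algorithm always takes $s=v^*$ whenever any odd-degree vertex exists. The $O(m)$ running time is then immediate from linear-time connected components, Hierholzer's algorithm, and a single linear sweep for the coloring, using the assumption $m\ge n/2$.
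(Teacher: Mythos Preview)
Your proposal is correct and follows essentially the same approach as the paper: augment each component with a dummy vertex attached to the odd-degree vertices, find an Eulerian circuit starting at the dummy, two-color alternately, and then delete the dummy; the discrepancy and running-time arguments are identical. The only notable difference is that you spell out how to combine the per-component $\lfloor m_C/2\rfloor$/$\lceil m_C/2\rceil$ counts (by swapping colors in suitably many odd-$m_C$ components) to achieve the global $\lfloor m/2\rfloor$/$\lceil m/2\rceil$ split, a step the paper elides under its ``without loss of generality $G$ is connected'' assumption.
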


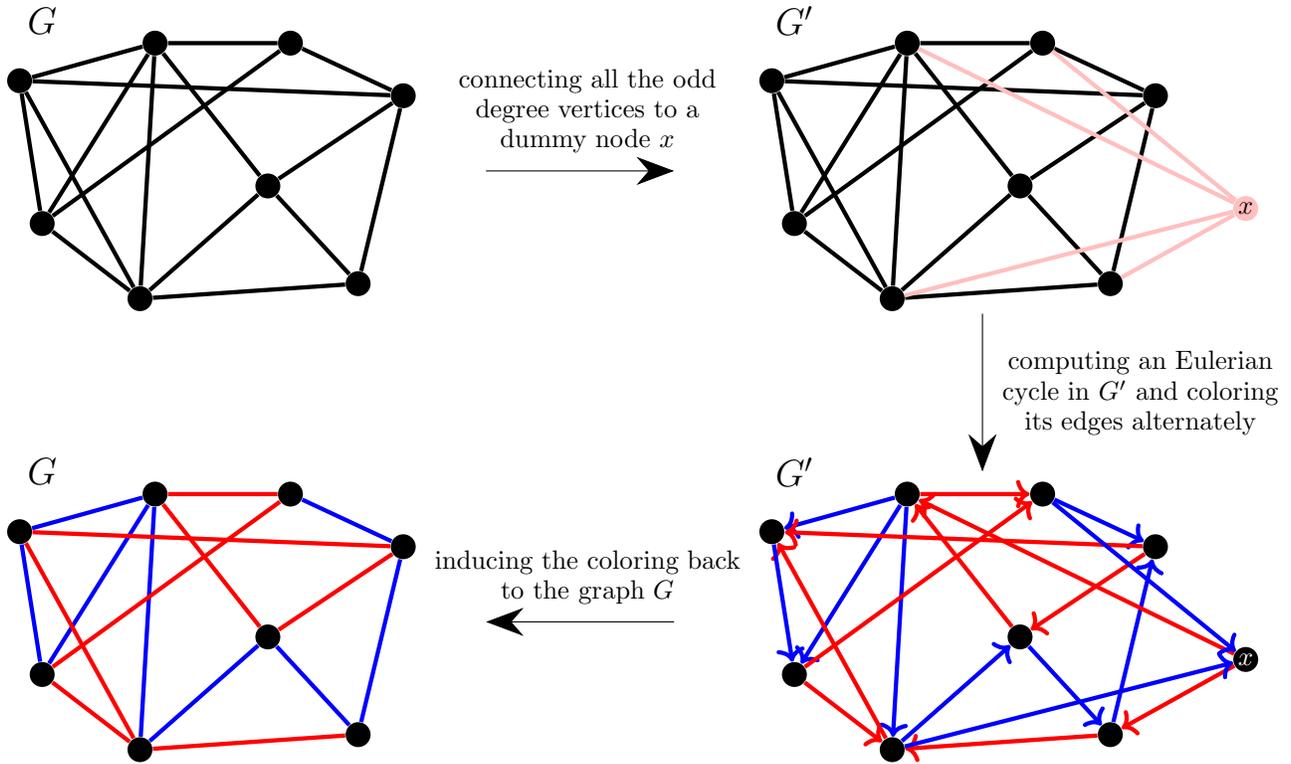
\begin{figure}
    \begin{center}
    \addtolength{\leftskip} {-2cm} 
    \addtolength{\rightskip}{-2cm}
    \begin{tikzpicture}
    \begin{scope}[xshift=-5cm]
    
        \node[circle, fill=black] at (0.3,1.3) (n1) {};
        \node[circle, fill=black] at (-1,2.3) (n2) {};
        \node[circle, fill=black] at (2,2.8) (n3) {};
        \node[circle, fill=black] at (0.5,4.7) (n4) {};
        \node[circle, fill=black] at (3.8,4) (n5) {};
        \node[circle, fill=black] at (-1.3,4.2) (n6) {};
        \node[circle, fill=black] at (2.3,4.7) (n7) {};
        \node[circle, fill=black] at (3.2,1.5) (n8) {};

        \path [-](n1) edge[ultra thick] (n2);
        \path [-](n1) edge[ultra thick] (n4);
        \path [-](n1) edge[ultra thick] (n3);
        \path [-](n2) edge[ultra thick] (n4);
        \path [-](n3) edge[ultra thick] (n5);
        \path [-](n4) edge[ultra thick] (n7);
        \path [-](n4) edge[ultra thick] (n6);
        \path [-](n5) edge[ultra thick] (n7);
        \path [-](n3) edge[ultra thick] (n4);
        \path [-](n1) edge[ultra thick] (n6);
        \path [-](n3) edge[ultra thick] (n8);
        \path [-](n5) edge[ultra thick] (n8);
        \path [-](n1) edge[ultra thick] (n8);
        \path [-](n2) edge[ultra thick] (n7);
        \path [-](n2) edge[ultra thick] (n6);
        \path [-](n5) edge[ultra thick] (n6);
    
    \end{scope}

    \begin{scope}[xshift=5cm]
    
        \node[circle, fill=black] at (0.3,1.3) (n1) {};
        \node[circle, fill=black] at (-1,2.3) (n2) {};
        \node[circle, fill=black] at (2,2.8) (n3) {};
        \node[circle, fill=black] at (0.5,4.7) (n4) {};
        \node[circle, fill=black] at (3.8,4) (n5) {};
        \node[circle, fill=black] at (-1.3,4.2) (n6) {};
        \node[circle, fill=black] at (2.3,4.7) (n7) {};
        \node[circle, fill=black] at (3.2,1.5) (n8) {};
        \node[circle, fill=pink] at (5,2.5) (n9) {};

        \path [-](n1) edge[ultra thick] (n2);
        \path [-](n1) edge[ultra thick] (n4);
        \path [-](n1) edge[ultra thick] (n3);
        \path [-](n2) edge[ultra thick] (n4);
        \path [-](n3) edge[ultra thick] (n5);
        \path [-](n4) edge[ultra thick] (n7);
        \path [-](n4) edge[ultra thick] (n6);
        \path [-](n5) edge[ultra thick] (n7);
        \path [-](n3) edge[ultra thick] (n4);
        \path [-](n1) edge[ultra thick] (n6);
        \path [-](n3) edge[ultra thick] (n8);
        \path [-](n5) edge[ultra thick] (n8);
        \path [-](n1) edge[ultra thick] (n8);
        \path [-](n2) edge[ultra thick] (n7);
        \path [-](n2) edge[ultra thick] (n6);
        \path [-](n5) edge[ultra thick] (n6);
        \path [-](n1) edge[ultra thick, pink] (n9);
        \path [-](n4) edge[ultra thick, pink] (n9);
        \path [-](n7) edge[ultra thick, pink] (n9);
        \path [-](n8) edge[ultra thick, pink] (n9);
    
    \end{scope}
    \begin{scope}[xshift=5cm, yshift=-6cm]
    
        \node[circle, fill=black] at (0.3,1.3) (n1) {};
        \node[circle, fill=black] at (-1,2.3) (n2) {};
        \node[circle, fill=black] at (2,2.8) (n3) {};
        \node[circle, fill=black] at (0.5,4.7) (n4) {};
        \node[circle, fill=black] at (3.8,4) (n5) {};
        \node[circle, fill=black] at (-1.3,4.2) (n6) {};
        \node[circle, fill=black] at (2.3,4.7) (n7) {};
        \node[circle, fill=black] at (3.2,1.5) (n8) {};
        \node[circle, fill=black] at (5,2.5) (n9) {};

        \path [->](n2) edge[ultra thick, red] (n1);
        \path [->](n4) edge[ultra thick, blue] (n1);
        \path [->](n1) edge[ultra thick, blue] (n3);
        \path [->](n4) edge[ultra thick, blue] (n2);
        \path [->](n5) edge[ultra thick, red] (n3);
        \path [->](n4) edge[ultra thick, red] (n7);
        \path [->](n4) edge[ultra thick, blue] (n6);
        \path [->](n7) edge[ultra thick, blue] (n5);
        \path [->](n3) edge[ultra thick, red] (n4);
        \path [->](n1) edge[ultra thick, red] (n6);
        \path [->](n3) edge[ultra thick, blue] (n8);
        \path [->](n8) edge[ultra thick, blue] (n5);
        \path [->](n8) edge[ultra thick, red] (n1);
        \path [->](n2) edge[ultra thick, red] (n7);
        \path [->](n6) edge[ultra thick, blue] (n2);
        \path [->](n5) edge[ultra thick, red] (n6);
        \path [->](n9) edge[ultra thick, red] (n4);
        \path [->](n9) edge[ultra thick, red] (n8);
        \path [->](n1) edge[ultra thick, blue] (n9);
        \path [->](n7) edge[ultra thick, blue] (n9);
    
    \end{scope}

    \begin{scope}[xshift=-5cm, yshift=-6cm]
    
        \node[circle, fill=black] at (0.3,1.3) (n1) {};
        \node[circle, fill=black] at (-1,2.3) (n2) {};
        \node[circle, fill=black] at (2,2.8) (n3) {};
        \node[circle, fill=black] at (0.5,4.7) (n4) {};
        \node[circle, fill=black] at (3.8,4) (n5) {};
        \node[circle, fill=black] at (-1.3,4.2) (n6) {};
        \node[circle, fill=black] at (2.3,4.7) (n7) {};
        \node[circle, fill=black] at (3.2,1.5) (n8) {};

        \path [-](n2) edge[ultra thick, red] (n1);
        \path [-](n4) edge[ultra thick, blue] (n1);
        \path [-](n1) edge[ultra thick, blue] (n3);
        \path [-](n4) edge[ultra thick, blue] (n2);
        \path [-](n5) edge[ultra thick, red] (n3);
        \path [-](n4) edge[ultra thick, red] (n7);
        \path [-](n4) edge[ultra thick, blue] (n6);
        \path [-](n7) edge[ultra thick, blue] (n5);
        \path [-](n3) edge[ultra thick, red] (n4);
        \path [-](n1) edge[ultra thick, red] (n6);
        \path [-](n3) edge[ultra thick, blue] (n8);
        \path [-](n8) edge[ultra thick, blue] (n5);
        \path [-](n8) edge[ultra thick, red] (n1);
        \path [-](n2) edge[ultra thick, red] (n7);
        \path [-](n6) edge[ultra thick, blue] (n2);
        \path [-](n5) edge[ultra thick, red] (n6);
    
    \end{scope}

    \draw [-{Stealth[length=5mm]}] (-0.1,3) -- (2.4,3);
    \node at (1.25,4.2){connecting all the odd};
    \node at (1.25,3.8){degree vertices to a};
    \node at (1.25,3.4){dummy node $x$};
    \draw [-{Stealth[length=5mm]}] (2.4,-3) -- (-0.1,-3);
    \node at (1.25,-2.2){inducing the coloring back};
    \node at (1.25,-2.6){to the graph $G$};
    \draw [-{Stealth[length=5mm]}] (6.5,1.1) -- (6.5,-1);
    \node at (8.6,0.45){computing an Eulerian};
    \node at (8.6,0.05){cycle in $G'$ and coloring};
    \node at (8.6,-0.35){its edges alternately};
    \node at (-6,5){\Large $G$};
    \node at (4,5){\Large $G'$};
    \node at (4,-1){\Large $G'$};
    \node at (-6,-1){\Large $G$};
    \node at (10,2.5){$x$};
    \node at (10,-3.5){\color{white}$x$};
    
    \end{tikzpicture}
    \end{center}
    \caption{Procedure \textsc{Degree-Splitting}}
    \label{fig:deg-splitting}
\end{figure}

For $i\in\{-1,1\}$, denote by $\deg_i(v)$ the number of edges incident on $v$ that are colored $i$ in the degree-splitting.
As $|\deg_{-1}(v)-\deg_1(v)|\leq 2$ for any $v\in V$, it follows that
\begin{equation}\label{eq4.1}
    \frac{\deg(v)}{2}-1\leq\deg_i(v)\leq \frac{\deg(v)}{2}+1
\end{equation}
for every $i\in\{-1,1\}$.\\

Using this degree-splitting algorithm we now describe our $(1+\varepsilon)\Delta$-edge-coloring algorithm. The algorithm receives as an input a graph $G=(V,E)$, and a non-negative integer parameter $h$. It first computes a degree-splitting of the input graph $G=(V,E)$, then defines two subgraphs $G_1$ and $G_2$ of $G$, each on the same vertex set $V$, and the set of edges of each of them is defined by the edges that are colored with the same color in the degree-splitting algorithm from the first step. Then, we compute recursively colorings of each of the subgraphs of $G$, and merge these colorings using disjoint palettes. The parameter $h$ determines the depth of the recursion. At the base case of the algorithm ($h=0$) it computes a proper $(\Delta+1)$-edge-coloring using one of the algorithms from Theorem \ref{Delta+1 col}.\\
The pseudo-code of Procedure \textsc{Edge-Coloring} is provided in Algorithm \ref{Procedure Edge-Coloring}. See also Figure \ref{fig:edge-coloring} for an illustration of Procedure \textsc{Edge-Coloring}.

\begin{algorithm}
  \caption{Procedure \textsc{Edge-Coloring} ($G=(V,E),h$)\label{Procedure Edge-Coloring}}
  \begin{algorithmic} [1] 
  \If{$h=0$}
    \LongState{compute a $(\Delta+1)$-edge-coloring $\varphi$ of $G$ using an algorithm based on Theorem \ref{Delta+1 col}}
    \Else
    \Let{$G_1=(V,E_1),G_2=(V,E_2)$}{\textsc{Degree-Splitting}($G$)}
    \LongState{compute a coloring $\varphi_1$ of $G_1$ using Procedure \textsc{Edge-Coloring}($G_1,h-1$)}
    \LongState{compute a coloring $\varphi_2$ of $G_2$ using Procedure \textsc{Edge-Coloring}($G_2,h-1$)}
    \State define coloring $\varphi$ of $G$ by $\varphi(e)=
    \begin{cases}
    \varphi_1(e),& e\in E_1\\
    |\varphi_1|+\varphi_2(e), & e\in E_2 
    \end{cases}$ \myindent{5.5}\Comment{$|\varphi_1|$ is the size of the palette of the coloring $\varphi_1$}
    \EndIf
    \State \textbf{return} $\varphi$
  \end{algorithmic}
\end{algorithm}

\begin{figure}
    \begin{center}
    \addtolength{\leftskip} {-2cm} 
    \addtolength{\rightskip}{-2cm}
    \begin{tikzpicture}

        \begin{scope}[xshift=-7.5cm]            
            \node[draw, ellipse, minimum size = 1.3cm] (nn) at (7,6){$G$};

        \end{scope}
        
        \begin{scope}[xshift=-2.5cm]
            \node[draw, ellipse, minimum size = 0.9cm] (31) at (6,4.5){$G_1$};
            \node[draw, ellipse, minimum size = 0.9cm] (3n) at (8,4.5){$G_2$};
            
            \node[draw, ellipse, minimum size = 1.3cm] (nn) at (7,6){$G$};

            \path [->](nn) edge[ultra thick] (31);
            \path [->](nn) edge[ultra thick] (3n);

        \end{scope}

        \begin{scope}[xshift=2.5cm]
            \node[draw, ellipse, minimum size = 0.9cm] (31) at (6,4.5){$G_1$};
            \node[draw, ellipse, minimum size = 0.9cm] (3n) at (8,4.5){$G_2$};
            
            \node[draw, ellipse, minimum size = 1.3cm] (nn) at (7,6){$G$};
    
            \path [-](nn) edge[ultra thick] (31);
            \path [-](nn) edge[ultra thick] (3n);

            \node at (5.8,4.25) {\color{red}$\bullet$};
            \node at (6,4.25) {\color{green}$\bullet$};
            \node at (6.2,4.25) {\color{blue}$\bullet$};

            \node at (7.8,4.25) {\color{cyan}$\bullet$};
            \node at (8,4.25) {\color{magenta}$\bullet$};
            \node at (8.2,4.25) {\color{gray}$\bullet$};

        \end{scope}

        \begin{scope}[xshift=7.5cm]
            \node[draw, ellipse, minimum size = 0.9cm] (31) at (6,4.5){$G_1$};
            \node[draw, ellipse, minimum size = 0.9cm] (3n) at (8,4.5){$G_2$};
            
            \node[draw, ellipse, minimum size = 1.3cm] (nn) at (7,6){$G$};
    
            \path [->](31) edge[ultra thick] (nn);
            \path [->](3n) edge[ultra thick] (nn);

            \node at (5.8,4.25) {\color{red}$\bullet$};
            \node at (6,4.25) {\color{green}$\bullet$};
            \node at (6.2,4.25) {\color{blue}$\bullet$};

            \node at (7.8,4.25) {\color{cyan}$\bullet$};
            \node at (8,4.25) {\color{magenta}$\bullet$};
            \node at (8.2,4.25) {\color{gray}$\bullet$};
            
            \node at (6.7,5.75) {\color{red}$\bullet$};
            \node at (7,5.75) {\color{green}$\bullet$};
            \node at (7.3,5.75) {\color{blue}$\bullet$};

            \node at (6.7,5.58) {\color{cyan}$\bullet$};
            \node at (7,5.58) {\color{magenta}$\bullet$};
            \node at (7.3,5.58) {\color{gray}$\bullet$};

        \end{scope}

    \draw [-{Stealth[length=5mm]}] (1,6) -- (3,6);
    \node at (2,7.2){partitioning the graph};
    \node at (2,6.8){using Procedure };
    \node at (2,6.4){\textsc{Degree-Splitting}};
    \draw [-{Stealth[length=5mm]}] (5.9,6) -- (7.9,6);
    \node at (6.95,6.8){recursively coloring the};
    \node at (6.95,6.4){subgraphs $G_1$ and $G_2$};
    \draw [-{Stealth[length=5mm]}] (10.8,6) -- (12.8,6);
    \node at (11.9,6.8){merging the coloring};
    \node at (11.9,6.4){using disjoint palettes};

    \end{tikzpicture}
    \end{center}
    \caption{Procedure \textsc{Edge-Coloring}}
    \label{fig:edge-coloring}
\end{figure}
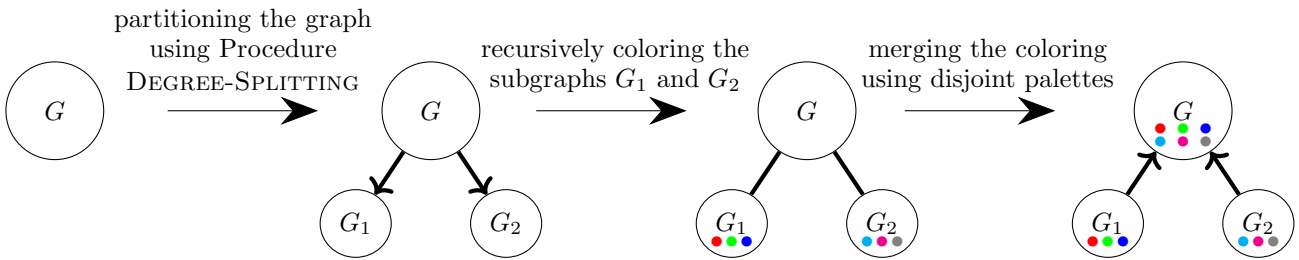

\subsection{Analysis of the Algorithm}\label{sec: Analysis of the Algorithm}

In this section we analyse Procedure \textsc{Edge-Coloring}.\\
We will show that the coloring computed by Procedure \textsc{Edge-Coloring} is a proper edge-coloring. We will bound the number of colors that $\varphi$, defined in Procedure \textsc{Edge-Coloring}, uses. We will also bound the running time of this algorithm in the deterministic and randomized settings, and present a simple trade-off between the number of colors that the algorithm uses and its running time.\\

We begin by presenting some notations that we will use for the analysis. Let $G=(V,E)$ be an input for Procedure \textsc{Edge-Coloring}. We will use the notation $G^{(i)}$ for a graph that is computed after $i$ levels of recursion. We denote its number of vertices by $n^{(i)}$, its number of edges by $m^{(i)}$ and its maximum degree by $\Delta^{(i)}$ (for example $G^{(0)}=G$, $n^{(0)}=n$, $m^{(0)}=m$, and $\Delta^{(0)}=\Delta$).\\

One of the main parameters that affects the performance of our algorithm is the maximum degree of the input 
graph. Therefore, we will start the analysis by bounding the maximum degree of the input graph in every level of the recursion.

\begin{claim}[Maximum degree of recursive graphs]\label{deg bound}
    Let $G=(V,E)$ graph with maximum degree $\Delta$ and $h\leq \log\Delta$ be an input for Procedure \textsc{Edge-Coloring}, and let $0\leq i\leq h$. The maximum degree $\Delta^{(i)}$ of a graph $G^{(i)}$ that was computed after $i$ levels of recursion satisfies $\frac{\Delta}{2^i}-2\leq\Delta^{(i)}\leq\frac{\Delta}{2^i}+2$.
\end{claim}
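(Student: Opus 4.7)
The plan is to prove the bound by induction on the recursion depth $i$. The base case $i=0$ is immediate: $G^{(0)} = G$, so $\Delta^{(0)} = \Delta$, which trivially satisfies $\frac{\Delta}{2^0}-2 \leq \Delta^{(0)} \leq \frac{\Delta}{2^0}+2$.

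For the inductive step, I assume that a graph $G^{(i)}$ produced after $i$ levels of recursion (with $i < h$) satisfies $\frac{\Delta}{2^i}-2 \leq \Delta^{(i)} \leq \frac{\Delta}{2^i}+2$, and I show the analogous bound for each of its two children $G_1, G_2$ produced by Procedure \textsc{Degree-Splitting}. The key input is inequality~(\ref{eq4.1}), which, applied to $G^{(i)}$, states that for every vertex $v$ and every color class $j\in\{-1,1\}$,
\[
    \frac{\deg_{G^{(i)}}(v)}{2} - 1 \leq \deg_j(v) \leq \frac{\deg_{G^{(i)}}(v)}{2} + 1,
\]
where $\deg_j(v)$ is the degree of $v$ in the corresponding child subgraph $G^{(i+1)}$.

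For the upper bound, every vertex of $G^{(i)}$ has degree at most $\Delta^{(i)}$, so its degree in $G^{(i+1)}$ is at most $\tfrac{\Delta^{(i)}}{2}+1 \leq \tfrac{1}{2}\bigl(\tfrac{\Delta}{2^{i}}+2\bigr)+1 = \tfrac{\Delta}{2^{i+1}}+2$, giving $\Delta^{(i+1)} \leq \tfrac{\Delta}{2^{i+1}}+2$. For the lower bound, pick a vertex $v^\star$ of $G^{(i)}$ with $\deg_{G^{(i)}}(v^\star)=\Delta^{(i)}$. Its degree in the child $G^{(i+1)}$ is at least $\tfrac{\Delta^{(i)}}{2}-1 \geq \tfrac{1}{2}\bigl(\tfrac{\Delta}{2^{i}}-2\bigr)-1 = \tfrac{\Delta}{2^{i+1}}-2$, which forces $\Delta^{(i+1)} \geq \tfrac{\Delta}{2^{i+1}}-2$. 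This closes the induction.

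There is no real obstacle here; the statement is essentially inequality~(\ref{eq4.1}) telescoped across the recursion tree, with the constant additive $2$ absorbing the $\pm 1$ slack at every level because $\tfrac{1}{2}\cdot 2 + 1 = 2$. The hypothesis $h \leq \log\Delta$ is not needed for the algebra itself but guarantees that $\tfrac{\Delta}{2^i} \geq 1$, so that the lower bound is meaningful and the recursion does not descend into trivially empty subgraphs.
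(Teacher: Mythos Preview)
Your proof is correct and follows essentially the same approach as the paper: induction on the recursion depth, with both bounds obtained by combining the inductive hypothesis with inequality~(\ref{eq4.1}) and noting that $\tfrac12\cdot 2+1=2$ so the additive slack does not grow. The only cosmetic difference is that you make the choice of a maximum-degree vertex $v^\star$ explicit for the lower bound, whereas the paper leaves this implicit.
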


\begin{proof}
    We prove the claim by induction on $i$:\\
    For $i=0$, we have $G^{(0)}=G$, and indeed the maximum degree of $G$ satisfies $\Delta-2\leq\Delta\leq\Delta+2$.\\
    For $i>0$, let $\Delta^{(i)}$ be the maximum degree of a graph $G^{(i)}$ on which the procedure is invoked on the $i$'th level of the recursion and let $\Delta^{(i-1)}$ be the maximum degree of the graph $G^{(i-1)}$, which we splitted in the previous level of recursion in order to create the graph $G^{(i)}$. By the induction hypothesis and by Equation (\ref{eq4.1}) we have
    $$\Delta^{(i)}\leq\frac{\Delta^{(i-1)}}{2}+1\leq\frac{\frac{\Delta}{2^{i-1}}+2}{2}+1=\frac{\Delta}{2^i}+2,$$
    and
    $$\Delta^{(i)}\geq \frac{\Delta^{(i-1)}}{2}-1\geq \frac{\frac{\Delta}{2^{i-1}}-2}{2}-1= \frac{\Delta}{2^i}-2.$$
\end{proof}

We now bound the size of a subgraph $G^{(i)}$ computed after $i$ levels of recursion. 

\begin{claim}[Recursive graphs' number of edges]\label{edge bound}
    Let $G=(V,E)$ graph with maximum degree $\Delta$ and $h\leq \log\Delta$ be an input for Procedure \textsc{Edge-Coloring}, and let $0\leq i\leq h$. The number of edges $m^{(i)}$ of a graph $G^{(i)}$ that was computed after $i$ levels of recursion satisfies $m^{(i)}\geq\frac{m}{2^i}-2$.
\end{claim}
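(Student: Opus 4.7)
The plan is to prove this by straightforward induction on $i$, leveraging the second ``moreover'' clause of Theorem~\ref{Procedure Degree-Splitting}, which guarantees that Procedure \textsc{Degree-Splitting} splits the edge set of a graph with $m'$ edges into two subsets of sizes $\lfloor m'/2 \rfloor$ and $\lceil m'/2 \rceil$. In particular, whichever of the two recursive subgraphs $G^{(i)}$ is considered, its edge count satisfies $m^{(i)} \geq \lfloor m^{(i-1)} / 2 \rfloor$.

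For the base case $i = 0$, we have $m^{(0)} = m \geq m - 2 = m/2^0 - 2$, which is trivially true. For the inductive step, assuming $m^{(i-1)} \geq m/2^{i-1} - 2$, I would use the elementary bound $\lfloor x \rfloor \geq x - 1/2$ when $x$ is a half-integer (equivalently $\lfloor m^{(i-1)}/2 \rfloor \geq (m^{(i-1)} - 1)/2$), to derive
\begin{equation*}
    m^{(i)} \;\geq\; \left\lfloor \frac{m^{(i-1)}}{2} \right\rfloor \;\geq\; \frac{m^{(i-1)}}{2} - \frac{1}{2} \;\geq\; \frac{1}{2}\left(\frac{m}{2^{i-1}} - 2\right) - \frac{1}{2} \;=\; \frac{m}{2^i} - \frac{3}{2} \;\geq\; \frac{m}{2^i} - 2.
\end{equation*}

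There is no real obstacle here: the claim is intentionally slightly lossy (I actually obtain the sharper bound $m/2^i - 3/2$), which comfortably absorbs the additive loss of $1/2$ incurred at each recursive step by the rounding down in the Degree-Splitting procedure. The only subtlety worth flagging is that the claim is stated for an arbitrary recursive subgraph $G^{(i)}$, so one must invoke the lower bound $\lfloor m^{(i-1)}/2 \rfloor$ (rather than $\lceil m^{(i-1)}/2 \rceil$) to handle the smaller of the two halves produced by Procedure \textsc{Degree-Splitting}; this is precisely what the second part of Theorem~\ref{Procedure Degree-Splitting} justifies. The hypothesis $h \leq \log \Delta$ plays no role in this particular claim and is inherited from the statement only for uniformity with Claim~\ref{deg bound}.
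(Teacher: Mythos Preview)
Your proposal is correct and follows essentially the same inductive argument as the paper: both invoke Theorem~\ref{Procedure Degree-Splitting} to get $m^{(i)}\geq\lfloor m^{(i-1)}/2\rfloor$ and then apply the induction hypothesis. The only cosmetic difference is that you use the sharper estimate $\lfloor m^{(i-1)}/2\rfloor\geq m^{(i-1)}/2 - 1/2$ (valid since $m^{(i-1)}$ is an integer) and obtain $m/2^i - 3/2$, whereas the paper uses the looser $\lfloor m^{(i-1)}/2\rfloor\geq m^{(i-1)}/2 - 1$ and lands exactly on $m/2^i - 2$.
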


\begin{proof}
    We prove the claim by induction on $i$:\\
    For $i=0$, we have $G^{(0)}=G$, and indeed the maximum degree of $G$ satisfies $m-2\leq m$.\\
    For $i>0$, let $m^{(i)}$ be the the number of edges of a graph $G^{(i)}$ on which the procedure is invoked on the $i$'th level of the recursion and let $m^{(i-1)}$ be the number of edges of the graph $G^{(i-1)}$, which we splitted in the previous level of recursion in order to create the graph $G^{(i)}$. By the induction hypothesis and by Theorem \ref{Procedure Degree-Splitting} we have
    $$m^{(i)}\geq \left\lfloor\frac{m^{(i-1)}}{2}\right\rfloor \geq \frac{m^{(i-1)}}{2}-1\geq \frac{\frac{m}{2^{i-1}}-2}{2}-1= \frac{m}{2^i}-2.$$
\end{proof}

We next show that the coloring produced by Procedure \textsc{Edge-Coloring} is a proper edge-coloring, and we also bound the number of colors it uses. For this end we will assume that the edge-coloring computed in line 2 of Algorithm \ref{Procedure Edge-Coloring} (base case) is a proper $(\Delta+1)$-edge-coloring. The proof of Lemma \ref{proper edge-coloring} can be found in Appendix \ref{App: Some Proofs from Section 3}.

\begin{lemma}[Proper edge-coloring]\label{proper edge-coloring}
    Let $G=(V,E)$ be a graph with maximum degree $\Delta$.
    Assuming that line 2 of Algorithm \ref{Procedure Edge-Coloring} produces a proper $(\Delta+1)$-edge-coloring, Procedure \textsc{Edge-Coloring} with parameter $h\leq\log\Delta$ computes a proper $\left(\Delta+3\cdot 2^h\right)$-edge-coloring of $G$.
\end{lemma}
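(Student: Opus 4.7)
I would prove both assertions (properness and the color count) simultaneously by induction on $h$. For the base case $h=0$, line 2 of Algorithm~\ref{Procedure Edge-Coloring} directly invokes the assumed $(\Delta+1)$-edge-coloring subroutine, producing a proper coloring that uses $\Delta+1\leq\Delta+3=\Delta+3\cdot 2^0$ colors.

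For the inductive step, I would peel off one level of recursion. Procedure \textsc{Degree-Splitting} returns $G_1,G_2$ whose maximum degrees $\Delta_1,\Delta_2$ satisfy $\Delta_i\leq\Delta/2+1$ by equation~(\ref{eq4.1}). Applying the induction hypothesis to each recursive call yields a proper edge-coloring $\varphi_i$ of $G_i$ using at most $\Delta_i+3\cdot 2^{h-1}$ colors. Properness of the merged coloring $\varphi$ is then immediate: two adjacent edges of $G$ that lie in the same $E_j$ get distinct colors under $\varphi_j$ by induction, while two adjacent edges lying in different $E_j$'s receive colors from the disjoint palettes $\{1,\ldots,|\varphi_1|\}$ and $\{|\varphi_1|+1,\ldots,|\varphi_1|+|\varphi_2|\}$, so they also differ.

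The main technical obstacle is the color count, because the naive bound $|\varphi_1|+|\varphi_2|\leq 2(\Delta/2+1+3\cdot 2^{h-1})=\Delta+2+3\cdot 2^h$ overshoots the target by exactly $2$. I would close this gap in one of two ways. First option: strengthen the inductive statement to the sharper bound $\Delta+3\cdot 2^h-2$, which still matches the base case since $\Delta+1=\Delta+3\cdot 2^0-2$; the inductive step then gives $|\varphi|\leq(\Delta_1+\Delta_2)+2(3\cdot 2^{h-1}-2)\leq(\Delta+2)+3\cdot 2^h-4=\Delta+3\cdot 2^h-2$, which closes the induction and implies the lemma. Second option: argue globally by leaf counting. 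Because every internal merge uses disjoint palettes, the final palette is the disjoint union of the $2^h$ leaf palettes of the recursion tree; by Claim~\ref{deg bound}, each leaf graph $G^{(h)}$ has maximum degree at most $\Delta/2^h+2$, so the base-case subroutine contributes at most $\Delta^{(h)}+1\leq\Delta/2^h+3$ colors per leaf, yielding $|\varphi|\leq 2^h\cdot(\Delta/2^h+3)=\Delta+3\cdot 2^h$, matching the stated bound exactly. The hypothesis $h\leq\log\Delta$ is used to invoke Claim~\ref{deg bound} throughout the recursion tree; small edge cases where recursion reaches a near-empty subgraph are harmless since the base-case subroutine is well-defined on any graph.
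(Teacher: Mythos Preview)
Your proposal is correct, and your Option~2 (leaf counting via Claim~\ref{deg bound}) is exactly the argument the paper gives: it proves properness by induction on $h$ just as you do, and then bounds the palette size by defining $f(i)$ as the number of colors used on a level-$i$ subgraph, showing $f(h)\le\Delta/2^h+3$ from Claim~\ref{deg bound} and $f(i)=2f(i+1)$, whence $f(0)\le\Delta+3\cdot 2^h$. Your Option~1 (strengthening the inductive invariant to $\Delta+3\cdot 2^h-2$) is a legitimate and slightly cleaner alternative that the paper does not use; it sidesteps the global appeal to Claim~\ref{deg bound} by relying only on the one-step bound $\Delta_1+\Delta_2\le\Delta+2$ from~(\ref{eq4.1}), at the cost of carrying a sharper statement through the induction.
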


We now use this analysis to derive results for deterministic and randomized settings.\\
In our deterministic algorithm, we use the algorithm from Theorem \ref{Delta+1 col}(1) in line 2 of Procedure \textsc{Edge-Coloring}. This algorithm deterministically produces a proper $(\Delta+1)$-edge-coloring. Hence by Lemma \ref{proper edge-coloring} we conclude the following corollary:

\begin{corollary}[Proper edge-coloring - deterministic setting]\label{Proper edge-coloring - deterministic sequential}
    Let $G=(V,E)$ be a graph with maximum degree $\Delta$.
    The deterministic variant of Procedure \textsc{Edge-Coloring} with parameter $h\leq\log\Delta$ computes a proper $\left(\Delta+3\cdot 2^h\right)$-edge-coloring of $G$ in the deterministic setting.
\end{corollary}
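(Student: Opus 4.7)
The proof is essentially an immediate combination of Lemma~\ref{proper edge-coloring} with the first part of Theorem~\ref{Delta+1 col}, so the plan is quite short.

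The plan is to verify that the hypothesis of Lemma~\ref{proper edge-coloring} is satisfied by the deterministic variant, and then simply invoke that lemma. In the deterministic variant, line 2 of Algorithm~\ref{Procedure Edge-Coloring} is implemented using the algorithm of Gabow et al.~\cite{gabow1985algorithms}, which by Theorem~\ref{Delta+1 col}(1) deterministically produces a proper $(\Delta^{(h)}+1)$-edge-coloring of the base-case subgraph $G^{(h)}$. This is precisely the hypothesis required by Lemma~\ref{proper edge-coloring}.

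Given this, Lemma~\ref{proper edge-coloring} directly yields that the output coloring $\varphi$ of $G$ is a proper edge-coloring using at most $\Delta + 3\cdot 2^h$ colors, as long as the parameter satisfies $h \leq \log\Delta$, which is exactly the assumption of the corollary. Since every step of the algorithm (the degree-splitting by Theorem~\ref{Procedure Degree-Splitting}, the recursive calls, and the base-case coloring by Theorem~\ref{Delta+1 col}(1)) is deterministic, the resulting procedure is deterministic as well, completing the proof.

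There is essentially no obstacle here: the corollary is a packaging of Lemma~\ref{proper edge-coloring} specialized to the deterministic base-case subroutine, and the only ``work'' is pointing out that the hypothesis of the lemma is fulfilled in this setting. The running-time analysis for the deterministic variant is not claimed in this corollary and is deferred to a later result, so no time-complexity argument is needed here.
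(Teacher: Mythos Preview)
Your proposal is correct and matches the paper's approach exactly: the paper simply notes that in the deterministic variant line~2 uses the algorithm of Theorem~\ref{Delta+1 col}(1), which deterministically yields a proper $(\Delta+1)$-edge-coloring, and then invokes Lemma~\ref{proper edge-coloring}. There is no additional argument in the paper beyond this observation.
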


For the randomized setting, we now describe the algorithm that we use in line 2 of Procedure \textsc{Edge-Coloring}. In recursive invocations of the algorithm, we apply degree-splittings and continue to the next level without the isolated vertices created by the degree-splitting. As a result, the number $n'$ of vertices in a graph on which the algorithm is invoked recursively may become very small. On the other hand, the failure probability of the randomized algorithm due to~\cite{bernshteyn2023fast} that we use at the bottom level of the recursion, decreases exponentially with $n'$. If $n'$ is too small, this failure probability may be quite large. To address this issue, at the base case of the algorithm (bottom level of the recursion) we will distinguish between two cases:
\begin{itemize}
    \item[-] If the graph $G^{(h)}$ has "many" vertices, namely, if the graph has more than $\tau=(c+1)\cdot\log n$ vertices for some constant $c$, then we will use the randomized algorithm from Theorem \ref{Delta+1 col}(2).
    \item[-] Otherwise, when the graph has a "small" number of vertices, we can compute the edge-coloring deterministically more efficiently, i.e., we will use the deterministic algorithm from Theorem \ref{Delta+1 col}(1), that will require only $O(m^{(h)}\cdot \Delta^{(h)}\cdot\log n^{(h)})=O(m^{(h)}\cdot \Delta^{(h)}\cdot\log\log n)$ time.
\end{itemize}

We now evaluate the probability that the randomized algorithm succeeds and produces the desired edge-coloring. The proof of Claim \ref{Proper edge-coloring - randomized sequential} can be found in Appendix \ref{App: Some Proofs from Section 3}.

\begin{claim}[Proper edge-coloring - randomized setting]\label{Proper edge-coloring - randomized sequential}
    Let $G=(V,E)$ be a graph with $n$ vertices and maximum degree $\Delta$, and let $c>0$ be a constant that determines the threshold value $\tau=(c+1)\cdot\log n$.
    The randomized variant of Procedure \textsc{Edge-Coloring} with parameter $h\leq \log\Delta-2$ computes a proper $\left(\Delta+3\cdot 2^h\right)$-edge-coloring of $G$ with probability at least $1-\frac{1}{n^{c}}$.
\end{claim}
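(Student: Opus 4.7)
The plan is to reduce the claim to Lemma \ref{proper edge-coloring} plus a union bound over the randomized base-case invocations. Since Lemma \ref{proper edge-coloring} already guarantees that the output is a proper $\left(\Delta + 3 \cdot 2^h\right)$-edge-coloring whenever line 2 returns a proper $(\Delta+1)$-coloring of the subgraph on which it is called, the only way the algorithm can fail is for some randomized invocation at the bottom of the recursion to return an invalid coloring. Deterministic base-case invocations succeed with probability $1$ and can be ignored.

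To bound the per-invocation failure probability, I would first apply Claim \ref{deg bound} together with the hypothesis $h \leq \log \Delta - 2$ to conclude that $\Delta^{(h)} \geq \Delta/2^h - 2 \geq 4 - 2 = 2$ for every bottom-level subgraph. This is precisely why the statement tightens the usual assumption $h \leq \log \Delta$ by $2$: it ensures that the per-invocation guarantee of Theorem \ref{Delta+1 col}(2) is nontrivial. Since the randomized routine is invoked only on subgraphs with $n^{(h)} > \tau = (c+1)\log n$, Theorem \ref{Delta+1 col}(2) yields a failure probability of at most
\[
\frac{1}{(\Delta^{(h)})^{n^{(h)}}} \leq \frac{1}{2^{(c+1)\log n}} = \frac{1}{n^{c+1}}.
\]

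Finally, the recursion produces at most $2^h \leq 2^{\log \Delta - 2} = \Delta/4 \leq n/4$ bottom-level subgraphs, so a union bound gives total failure probability at most $(n/4) \cdot n^{-(c+1)} < n^{-c}$, as desired. The main subtlety, and the reason the algorithm splits the base case into a "large" and a "small" branch, is that the guarantee of Theorem \ref{Delta+1 col}(2) decays with the subgraph's own vertex count $n^{(h)}$, which the repeated degree-splittings can drive arbitrarily far below $n$; without the deterministic fallback on subgraphs with $n^{(h)} \leq \tau$, the per-invocation failure probability could not be controlled uniformly by $n^{-(c+1)}$ and the union bound would break. No deeper ideas are required once this threshold is in place.
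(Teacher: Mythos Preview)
Your proposal is correct and follows essentially the same route as the paper: reduce to Lemma~\ref{proper edge-coloring}, lower-bound $\Delta^{(h)}\ge 2$ via Claim~\ref{deg bound} and the hypothesis $h\le\log\Delta-2$, use the threshold $n^{(h)}>\tau$ to bound each randomized failure by $n^{-(c+1)}$, and union-bound over the at most $2^h\le\Delta\le n$ leaves. The only cosmetic difference is that the paper writes the chain as a single inequality $2^h/(\Delta/2^h-2)^{(c+1)\log n}\le \Delta/n^{c+1}\le 1/n^c$, whereas you separate the per-invocation bound from the leaf count before multiplying.
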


We next proceed to analyse the running time of the algorithm in deterministic and randomized settings.\\

Recall that we assumed that there are no isolated vertices in the graph, i.e., $m\geq \frac{n}{2}$. Hence by Theorem \ref{Delta+1 col}(2) the randomized algorithm requires $O(\Delta^{18}\cdot n)=O(\Delta^{18}\cdot m)$ time.
With this bound, the deterministic and randomized algorithms from Theorem \ref{Delta+1 col} both require $m\cdot F(\Delta,n)$ time for some function $F(\cdot,\cdot)$, monotone in both parameters. The proof of Lemma \ref{time bound} is deferred to Appendix \ref{App: Some Proofs from Section 3}.

\begin{lemma}[A bound on the deterministic and randomized time]\label{time bound}
    Let $G=(V,E)$ be a graph with $n$ vertices, $m$ edges, and maximum degree $\Delta$. If line 2 of the algorithm requires $m\cdot F\left(\Delta,n\right)$ deterministic (respectively, expected) time, then Procedure \textsc{Edge-Coloring} with parameter $h\leq \log\Delta$ requires $O\left(m\cdot F\left(\frac{\Delta}{2^h}+2,n^{(h)}\right)+m\cdot h\right)$ deterministic (resp., expected) time.
\end{lemma}

We are now ready to summarize the main properties of Procedure \textsc{Edge-Coloring} in the deterministic and randomized settings.\\
By Theorem \ref{Delta+1 col}(1), Corollary \ref{Proper edge-coloring - deterministic sequential} and Lemma \ref{time bound}, we get the following properties of the deterministic algorithm:

\begin{corollary}[Properties of deterministic algorithm]\label{deterministic properties}
    Let $G=(V,E)$ be a graph with $n$ vertices, $m$ edges, and maximum degree $\Delta$.
    The deterministic variant of Procedure \textsc{Edge-Coloring} with parameter $h\leq \log
    \Delta$ computes a proper $\left(\Delta+3\cdot 2^h\right)$-edge-coloring of $G$ in $O\left(\frac{m\cdot\Delta\cdot\log n}{2^h}\right)$ time.
\end{corollary}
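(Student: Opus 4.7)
The plan is to combine the three previously established ingredients directly: Corollary \ref{Proper edge-coloring - deterministic sequential} provides the guarantee on the number of colors, Theorem \ref{Delta+1 col}(1) provides the concrete form of the base-case running time, and Lemma \ref{Another time and work bound} lifts this base-case cost to the cost of the full recursive procedure. There is no new structural argument to do, only an instantiation and a routine simplification.

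First, for the color bound, I would invoke Corollary \ref{Proper edge-coloring - deterministic sequential} verbatim: since the deterministic variant uses the algorithm of Theorem \ref{Delta+1 col}(1) in line 2 of Algorithm \ref{Procedure Edge-Coloring}, the output is a proper $(\Delta + 3\cdot 2^h)$-edge-coloring. No further work is required for this part.

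Next, for the running time, I would set $F(\Delta', n') = \Delta'\cdot\log n'$, which by Theorem \ref{Delta+1 col}(1) is the running-time function (up to a constant factor) of the base-case coloring routine in line 2, and this $F$ is monotone in both arguments as required by Lemma \ref{Another time and work bound}. Plugging this choice of $F$ into that lemma yields a running time of
\[
O\!\left(m\cdot\Bigl(\tfrac{\Delta}{2^h}+2\Bigr)\cdot\log n^{(h)} + m\cdot h\right).
\]
I would then simplify using $n^{(h)}\leq n$ (hence $\log n^{(h)}\leq \log n$) and the hypothesis $h\leq\log\Delta$, which implies $\Delta/2^h\geq 1$. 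Under these inequalities the additive $+2$ inside the parenthesis is absorbed into the leading term, and $m\cdot h\leq m\cdot\log\Delta\leq m\cdot\log n \leq m\cdot\frac{\Delta}{2^h}\cdot\log n$, so the $m\cdot h$ term is also absorbed. The expression collapses to $O\!\left(\frac{m\cdot\Delta\cdot\log n}{2^h}\right)$, as claimed.

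There is no genuinely hard step here; the only mild subtlety is ensuring that the $m\cdot h$ overhead coming from Lemma \ref{Another time and work bound} (which accounts for the $O(m^{(i)})$ cost of degree-splitting and of merging palettes at each of the $h$ recursion levels) is dominated by the main term, and this is exactly what the $h\leq\log\Delta$ assumption buys. Together with the color bound from Corollary \ref{Proper edge-coloring - deterministic sequential}, this completes the proof of Corollary \ref{deterministic properties}.
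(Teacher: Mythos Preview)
Your proposal is correct and follows exactly the approach the paper takes: the paper states this corollary as an immediate consequence of Theorem \ref{Delta+1 col}(1), Corollary \ref{Proper edge-coloring - deterministic sequential}, and Lemma \ref{Another time and work bound}, without spelling out the simplification. Your instantiation $F(\Delta',n')=\Delta'\log n'$ and the absorption of the $+2$ and the $m\cdot h$ term via $h\le\log\Delta$ and $\Delta\le n$ are precisely the routine checks the paper leaves implicit.
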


For the randomized algorithm, let $c>0$ be an arbitrary constant that determines the threshold $\tau=(c+1)\cdot\log n$, and let $t=\frac{m}{(c+1)^2\cdot\log^2n+2}$. Recall that we apply the randomized algorithm from Theorem \ref{Delta+1 col}(2) on the graph $G^{(h)}$ (with $n^{(h)}$ vertices, $m^{(h)}$ edges, and maximum degree $\Delta^{(h)}$) in line 2 of Procedure \textsc{Edge-Coloring} when $n^{(h)}\geq \tau$ (and otherwise we apply the deterministic algorithm from Theorem \ref{Delta+1 col}(1)). Observe that if $\Delta\leq t$, then by Claim \ref{edge bound}, in every subgraph $G^{(h)}$, computed after $h$ levels of recursion, there are at least $m^{(h)}\geq\frac{m}{2^h}-2$ edges. Hence, since $n^{(h)}\geq\sqrt{m^{(h)}}$, and $h\leq \log\Delta\leq\log t$, we get:
$$n^{(h)}\geq\sqrt{m^{(h)}}\geq\sqrt{\frac{m}{2^h}-2}\geq\sqrt{\frac{m}{t}-2}\geq \sqrt{(c+1)^2\cdot\log^2n}=(c+1)\cdot\log n=\tau.$$
Hence, if $\Delta\leq t$, we always apply the randomized algorithm from Theorem \ref{Delta+1 col}(2) in the base case of the algorithm, and the time required for line 2 of Procedure \textsc{Edge-Coloring} on the graph $G^{(h)}$ is $O\left(m^{(h)}\cdot\left(\Delta^{(h)}\right)^{18}\right)$.\\
Otherwise, when $\Delta>t$, observe that for $h\geq\log\left(\Delta\cdot\frac{\log\log n}{4\cdot\log n}\right)$, we have
\begin{eqnarray*}
    h&\geq&\log\left(\Delta\cdot\frac{\log\log n}{4\cdot\log n}\right)>\log\left(t\cdot\frac{\log\log n}{4\cdot\log n}\right)\\
    &=&\log\left(m\cdot\frac{\log\log n}{\left((c+1)^2\cdot\log^2 n +2\right)\cdot4\cdot\log n}\right)=\Omega(\log n),
\end{eqnarray*}
and since, $\frac{\Delta\cdot\log\log n}{2^h}\leq 4\cdot\log n$, we get that $\frac{\Delta\cdot\log\log n}{2^h}=O(h)$.\\
Hence, if $\Delta>t$ and $h\geq\log\left(\Delta\cdot\frac{\log\log n}{4\cdot\log n}\right)$, by Claim \ref{deg bound}, the time required for line 2 of Procedure \textsc{Edge-Coloring} on the graph $G^{(h)}$ is 
\begin{align*}
    O\left(m^{(h)}\cdot\left(\Delta^{(h)}\right)^{18}+m^{(h)}\cdot\Delta^{(h)}\cdot\log n^{(h)}\right)&=O\left(m^{(h)}\cdot\left(\Delta^{(h)}\right)^{18}+\frac{m^{(h)}\cdot\Delta\cdot\log\log n}{2^h}\right)\\
    &=O\left(m^{(h)}\cdot\left(\Delta^{(h)}\right)^{18}+m^{(h)}\cdot h\right).
\end{align*}
Therefore, by Theorem \ref{Delta+1 col}(2), Claim \ref{Proper edge-coloring - randomized sequential} and Lemma \ref{time bound}, we get the following properties:

\begin{corollary}[Properties of randomized algorithm]\label{randomized properties}
    Let $G=(V,E)$ be a graph with $n$ vertices, $m$ edges, and maximum degree $\Delta$, and let $c>0$ be the constant that determines the threshold $\tau=(c+1)\cdot\log n$. The randomized variant of Procedure \textsc{Edge-Coloring} with parameter $\log\left(\Delta\cdot\frac{\log\log n}{4\cdot\log n}\right)\leq h\leq \log\Delta-2$ computes a proper $\left(\Delta+3\cdot 2^h\right)$-edge-coloring of $G$ in 
    $O\left(m\cdot\left(\frac{\Delta}{2^h}\right)^{18}+m\cdot h\right)$ time with probability at least $1-\frac{1}{n^c}$.
\end{corollary}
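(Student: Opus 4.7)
The plan is essentially to stitch together the three ingredients cited immediately above the statement: Theorem~\ref{Delta+1 col}(2), Claim~\ref{Proper edge-coloring - randomized sequential}, and Lemma~\ref{Another time and work bound}, using the two-case analysis that has already been carried out in the paragraph preceding the corollary. The correctness part -- that the output is a proper $(\Delta + 3\cdot 2^h)$-edge-coloring with probability at least $1 - \frac{1}{n^c}$ -- is exactly the content of Claim~\ref{Proper edge-coloring - randomized sequential}, whose hypothesis $h \leq \log\Delta - 2$ is assumed in our statement. So no further work is required on that front.

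The running time bound is where I would spend the effort. I would apply Lemma~\ref{Another time and work bound}, whose conclusion is $O\!\left(m \cdot F(\Delta/2^h + 2, n^{(h)}) + m \cdot h\right)$ whenever line~2 of Procedure \textsc{Edge-Coloring} runs in $m^{(h)} \cdot F(\Delta^{(h)}, n^{(h)})$ time on the bottom-level subgraph $G^{(h)}$, with $F$ monotone in both arguments. So the task reduces to identifying a suitable monotone $F$ that dominates the base-case cost in both regimes of the preceding discussion.

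In the first regime, $\Delta \leq t = \frac{m}{(c+1)^2 \log^2 n + 2}$, the preceding paragraph shows that $n^{(h)} \geq \tau$ always, so the randomized subroutine of Theorem~\ref{Delta+1 col}(2) is used throughout the base cases and line~2 takes $O(m^{(h)} \cdot (\Delta^{(h)})^{18})$ time. Taking $F(d,\nu) = d^{18}$, Lemma~\ref{Another time and work bound} together with Claim~\ref{deg bound} yields a total of $O\!\left(m \cdot (\Delta/2^h + 2)^{18} + m \cdot h\right) = O\!\left(m \cdot (\Delta/2^h)^{18} + m \cdot h\right)$, since $\Delta/2^h \geq 1$ (using $h \leq \log\Delta - 2$). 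In the second regime, $\Delta > t$ together with $h \geq \log\!\left(\Delta \cdot \frac{\log\log n}{4 \log n}\right)$, the base case may instead invoke the deterministic algorithm of Theorem~\ref{Delta+1 col}(1), but the preceding paragraph already shows that in this regime $\frac{\Delta \log\log n}{2^h} = O(h)$, so the cost of line~2 is $O(m^{(h)} \cdot (\Delta^{(h)})^{18} + m^{(h)} \cdot h)$; this fits the framework with $F(d,\nu) = d^{18} + h$ (monotone in $d$), and Lemma~\ref{Another time and work bound} again delivers the claimed bound.

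The only mildly subtle point -- and the step I would check most carefully -- is that Lemma~\ref{Another time and work bound} is stated for a fixed function $F$ of the input parameters, whereas in the second regime our $F$ depends on the recursion depth $h$. This is harmless because $h$ is fixed throughout one top-level call, so $F$ can be taken as a constant-depending function; alternatively, I would simply absorb the extra $m^{(h)} \cdot h$ summand into the $m \cdot h$ additive term of Lemma~\ref{Another time and work bound} by a trivial inductive argument identical in form to the one proving that lemma. Putting the two cases together gives the claimed $O\!\left(m \cdot (\Delta/2^h)^{18} + m \cdot h\right)$ running time, completing the proof.
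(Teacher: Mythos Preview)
Your proposal is correct and follows essentially the same approach as the paper: the paper's own ``proof'' is precisely the two-regime discussion in the paragraph preceding the corollary, capped by the sentence ``Therefore, by Theorem~\ref{Delta+1 col}(2), Claim~\ref{Proper edge-coloring - randomized sequential} and Lemma~\ref{Another time and work bound}, we get the following properties.'' Your write-up simply makes explicit the application of Lemma~\ref{Another time and work bound} with the appropriate $F$ in each regime and flags the (harmless) dependence of $F$ on $h$, which the paper leaves implicit.
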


To get simpler and more useful trade-offs, we set $h\approx\log\left(\varepsilon\cdot\Delta\right)$. Note that if $\varepsilon\leq \frac{\log\log n}{4\cdot\log n}$, then the running time of the randomized algorithm is at least $\Omega\left(m\cdot\varepsilon^{-18}\right)=\Omega\left(\frac{m}{\varepsilon}\cdot\frac{\log^{17}n}{\log^{17}(\log n)}\right)$, which is larger than the running time $O\left(\frac{m\cdot\log n}{\varepsilon}\right)$ of our deterministic algorithm. Hence we use the randomized algorithm only for larger values of $\varepsilon$, i.e., for $\varepsilon\geq\frac{\log\log n}{4\cdot\log n}$. In this range we have $h\approx\log(\varepsilon\cdot\Delta)\geq \log\left(\Delta\cdot\frac{\log\log n}{4\cdot\log n}\right)$.

\begin{theorem}[Trade-off]\label{trade-off}
    Let $G=(V,E)$ be a graph with $n$ vertices, $m$ edges, and maximum degree $\Delta$.
    \begin{enumerate}
        \item[(1)] $\mathrm{(Deterministic}$ $\mathrm{trade-off)}$ Let $\frac{1}{\Delta}\leq\varepsilon<1$. The deterministic variant of Procedure \textsc{Edge-Coloring} with parameter $h=\max\left\{\left\lfloor\log\left(\frac{\varepsilon\cdot\Delta}{3}\right)\right\rfloor,0\right\}$ computes a proper $(1+\varepsilon)\Delta$-edge-coloring of $G$ in $O\left(\frac{m\cdot\log n}{\varepsilon}\right)$ time.
        \item[(2)] $\mathrm{(Randomized}$ $\mathrm{trade-off)}$ Let $\max\left\{\frac{\log\log n}{\log n},\frac{1}{\Delta}\right\}\leq\varepsilon<1$, and let $c>0$ be an arbitrary constant that determines the threshold $\tau=(c+1)\cdot\log n$.
        The randomized variant of Procedure \textsc{Edge-Coloring} with parameter $h=\max\left\{\left\lfloor\log\left(\frac{\varepsilon\cdot\Delta}{4}\right)\right\rfloor,0\right\}$ computes a proper $(1+\varepsilon)\Delta$-edge-coloring of $G$ in 
        $O\left(m\cdot\varepsilon^{-18}+m\cdot \log(\varepsilon\cdot\Delta)\right)$ time with probability at least $1-\frac{1}{n^c}$.
    \end{enumerate}
\end{theorem}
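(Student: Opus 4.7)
The plan is to derive both statements as direct specializations of Corollaries \ref{deterministic properties} and \ref{randomized properties} by plugging in the prescribed value of $h$ and simplifying. In both parts, the choice $h \approx \log(\varepsilon\Delta/c)$ for a small constant $c$ is engineered to simultaneously keep $3\cdot 2^h \leq \varepsilon\Delta$ (so that the $\Delta + 3\cdot 2^h$ color bound from the corollaries becomes $(1+\varepsilon)\Delta$) and $2^h = \Theta(\varepsilon\Delta)$ (so that the factor $\Delta/2^h$ appearing in the running times becomes $\Theta(1/\varepsilon)$).

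For part (1), when $\varepsilon\Delta \geq 3$ I would set $h = \lfloor\log(\varepsilon\Delta/3)\rfloor$. A standard floor estimate gives $\varepsilon\Delta/6 < 2^h \leq \varepsilon\Delta/3$, so $\Delta + 3\cdot 2^h \leq (1+\varepsilon)\Delta$, and the running-time bound $O(m\Delta\log n/2^h)$ of Corollary \ref{deterministic properties} collapses to $O(m\log n/\varepsilon)$. Since $\varepsilon < 1$, the hypothesis $h \leq \log\Delta$ of that corollary is satisfied. In the boundary case $\varepsilon\Delta < 3$ one uses $h=0$; the algorithm then reduces to a single invocation of the $(\Delta+1)$-edge-coloring routine of Theorem \ref{Delta+1 col}(1), which uses at most $\Delta + 1 \leq (1+\varepsilon)\Delta$ colors (using $\varepsilon \geq 1/\Delta$) and runs in time $O(m\Delta\log n) = O(m\log n/\varepsilon)$ (using $\varepsilon\Delta < 3$).

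Part (2) is analogous, with $h = \lfloor\log(\varepsilon\Delta/4)\rfloor$ in the main case $\varepsilon\Delta \geq 4$ and $h=0$ otherwise. The floor bound yields $\varepsilon\Delta/8 < 2^h \leq \varepsilon\Delta/4$, so $3\cdot 2^h \leq (3/4)\varepsilon\Delta \leq \varepsilon\Delta$ (color bound) and $\Delta/2^h < 8/\varepsilon$, whence $(\Delta/2^h)^{18} = O(\varepsilon^{-18})$; together with $h \leq \log(\varepsilon\Delta)$, the running-time bound $O(m(\Delta/2^h)^{18} + mh)$ of Corollary \ref{randomized properties} simplifies to $O(m\varepsilon^{-18} + m\log(\varepsilon\Delta))$. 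The upper admissibility hypothesis $h \leq \log\Delta - 2$ holds because the factor $4$ in the definition of $h$ supplies the $-2$ when $\varepsilon < 1$. The success probability $1 - 1/n^c$ transfers verbatim, and the boundary case $\varepsilon\Delta < 4$ is dispatched by the base case exactly as in part (1).

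The one nontrivial point to monitor is that Corollary \ref{randomized properties} also requires the \emph{lower} hypothesis $h \geq \log(\Delta\log\log n/(4\log n))$, and with our choice $2^h > \varepsilon\Delta/8$ this translates to $\varepsilon \geq 2\log\log n/\log n$, a factor-$2$ stronger than the theorem's assumption $\varepsilon \geq \log\log n/\log n$. I would handle this by observing that the underlying analysis preceding Corollary \ref{randomized properties} still goes through when $2^h$ is smaller by an absolute constant, at the cost of only a constant factor in the $mh$ term of the bottom-level overhead; absorbing this factor into the big-$O$ preserves the claimed time bound. Aside from this bookkeeping, the entire argument is routine algebra plus the verification that the denominators $3$ and $4$ in the definitions of $h$ are large enough to keep $h$ inside the admissibility window of each corollary.
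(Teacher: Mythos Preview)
Your proposal is correct and follows precisely the approach the paper takes: the paper does not give a formal proof of Theorem~\ref{trade-off} but treats it as an immediate consequence of Corollaries~\ref{deterministic properties} and~\ref{randomized properties}, via the short paragraph of discussion preceding the theorem in which $h\approx\log(\varepsilon\Delta)$ is substituted and the admissibility window for $h$ is checked informally. Your write-up is in fact more careful than the paper's, since you explicitly flag the factor-$2$ slack between the lower hypothesis $h\geq\log\!\left(\Delta\cdot\frac{\log\log n}{4\log n}\right)$ of Corollary~\ref{randomized properties} and the theorem's assumption $\varepsilon\geq\frac{\log\log n}{\log n}$; the paper absorbs this into its ``$\approx$'' and big-$O$ constants without comment.
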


Observe that for any constant value of $\varepsilon>0$, the running time of the randomized variant of the algorithm is $O(m\cdot\log\Delta)$. As long as $\Delta=n^{o(1)}$, it is smaller than the running time $O(m\cdot\log n)$ of the deterministic variant of the algorithm.

\section{Arboricity-Dependent Edge-Coloring}\label{sec: Arboricity-Dependent Edge-Coloring}

In this section we focus on the edge-coloring problem in graphs with bounded \emph{arboricity}. The \textit{arboricity} of a graph $G=(V,E)$ is defined by $\alpha(G)=\max_{S\subseteq V,|S|\geq 2}\left\{\left\lceil\frac{|E(G[S])|}{|S|-1}\right\rceil\right\}$, where $G[S]$ is the subgraph of $G$ induced by the vertices in $S$. Intuitively, we can think of a graph with small arboricity as a graph that is "sparse everywhere". It is well-known~\cite{nash1961decomposition} that for a graph $G=(V,E)$ with arboricity $\alpha$, its edge set can be partitioned into $\alpha$ forests. Such a partition of the edges into forests is called a \emph{forests-decomposition} of the graph. 
The problem of computing a forest-decomposition of the graph into $k$ forests reduces to the problem of computing an orientation of the edge set, such that the outdegree of each vertex is bounded by $k$. (Given such an orientation one can compute a forest-decomposition by partitioning the outgoing edges incident on each vertex among the $k$ forests.) 
It is well-known (see e.g.,~\cite{barenboim2008sublogarithmic}) that a forest-decomposition into $2\alpha$ forests, which also induces an acyclic orientation with outdegree at most $2\alpha$ can be computed in linear time.

The full description of the algorithm that computes such an orientation is provided in Appendix \ref{app: Forests-Decomposition Orientation} for the sake of completeness.

\begin{restatable}[Procedure \textsc{Forests-Decomposition Orientation} properties]{theorem}{ForestsDecompositionOrientation}\label{Forests-Decomposition Orientation properties}
    Let $G=(V,E)$ be an $n$-vertex, $m$-edge graph with arboricity $\alpha$. Procedure \textsc{Forests-Decomposition Orientation} computes a forests-decomposition orientation with degree at most $2\alpha$ in $O(n+m)$ time.
\end{restatable}

\subsection{The Oriented Degree-Splitting Problem}\label{sec: The Oriented Degree-Splitting Problem}

In this subsection we present an algorithm for computing a two-way (or oriented) degree-splitting of a graph equipped with an orientation $\mu$ of all edges. We use this algorithm in our main arboricity-dependent edge-coloring algorithm. We first present the (general) oriented degree-splitting problem.\\

\textbf{The Oriented Degree-Splitting Problem.}
The oriented degree-splitting problem seeks a partitioning of the edge set $E$ of a directed graph $G=(V,E)$ into two parts so that the partition looks almost balanced around each vertex. Concretely, we need to color each edge red or blue such that for each vertex, the difference between the number of \textbf{incoming} red and blue edges incident on it and the difference between the number of \textbf{outgoing} red and blue edges incident on it, are both upper-bounded by some small discrepancy threshold parameter $\kappa$. In other words, we want an assignment $q:E \rightarrow \{1, -1\}$ such that for each vertex $v\in V$, we have $$\left|\sum_{e\in E^{\text{in}}_v} q(e)\right|\leq \kappa \text{ and } \left|\sum_{e\in E^{\text{out}}_v} q(e)\right|\leq \kappa,$$
where $E^{\text{in}}_v$ and $E^{\text{out}}_v$ denote the sets of 
incoming and outgoing edges incident on $v$, respectively. We refer to $\left|\sum_{e\in E^{\text{in}}_v} q(e)\right|$ as the \textit{incoming discrepancy} of $v$, and to $\left|\sum_{e\in E^{\text{out}}_v} q(e)\right|$ as the \textit{outgoing discrepancy} of $v$. We also say that the color red (or 1) is \textit{opposite} to the color blue (or -1). We will also refer to degree-splitting $q(\cdot)$ as above as to a partition $(E_1,E_2)$ of the edge set $E$, where $E_1$ (respectively, $E_2$) is the set of red (respectively, blue) edges.\\

We now present an algorithm that computes an oriented degree-splitting with discrepancy at most 1 that we use in our main edge-coloring algorithm. In our oriented degree-splitting algorithm we use a modified version of \emph{paths decomposition} (see~\cite{israeli1986improved,karloff1987efficient, hanckowiak2001distributed, ghaffari2020improved} for previous work on degree-splitting). In the paths decomposition problem, we are given an undirected graph $G=(V,E)$, and we aim to partition its edge set $E$ into a collection of edge-disjoint paths (not necessarily simple ones), such that each vertex in the graph is an endpoint of at most $\kappa$ such paths, for some small threshold value $\kappa$. Such a partition is called a \emph{paths decomposition} of $G$ with \textit{degree} at most $\kappa$. More specifically, the degree of a vertex $v$ in a paths decomposition is the number of paths such that $v$ is one of their two endpoints.
We refer to a path $P=(v_0,e_1,v_1,...,e_k,v_k)$ as a \textit{maximal path} if all the incident edges of $v_0$ and $v_k$ in the graph are already in $P$, i.e., we cannot add more edges in both endpoints of the path. Note that a maximal path might be a cycle. In the case where $P$ is a cycle, i.e., $v_0=v_k$, we consider the vertex $v_0$ as the \textit{endpoint of the cycle}. Computing a paths decomposition with degree at most 2 can be implemented by a simple linear-time greedy algorithm, that first constructs a maximal path in the graph, then deletes the edges of the path, and recursively computes another maximal path in the remaining graph. Observe that in such paths decomposition, each vertex is an endpoint of at most one path (a vertex might be the two endpoints of a path, in the case of cycle).
Paths decomposition is a useful tool for computing a degree-splitting. This task can be done by coloring the edges of each path using two alternating colors. This coloring defines the degree-splitting of the graph. Below are a few crucial observations that help to prove the correctness of this process:\\
Let $P=(v_0,e_1,v_1,...,e_k,v_k)$ be a path in a paths decomposition.
\begin{itemize}
    \item Every appearance of a vertex $v$ as an inner vertex of a path, i.e., $v=v_i$ for some $i\in\{1,2,...,k-1\}$, does not affect its discrepancy ($e_i$ and $e_{i+1}$ are colored by opposite colors).
    \item Every appearance of a vertex as an endpoint of an odd length cycle increases its discrepancy by 2 ($e_1$ and $e_k$ are colored with the same color).
    \item Every appearance of a vertex as an endpoint of an even length cycle does not affect its discrepancy ($e_1$ and $e_k$ are colored by opposite colors).
    \item Every appearance of a vertex $v$ as an endpoint of a path that is not a cycle changes its discrepancy by 1 (there is one edge incident to this appearance of $v$ in this path).
    \end{itemize}
We will now describe our oriented degree-splitting algorithm. As we mentioned before, this algorithm is an oriented version of the undirected idea that was described above.
To adapt this idea also to oriented graphs, we make one main change. We focus on a certain type of paths, that we call alternating-directions paths.\\

For convenience, we consider a directed graph as an undirected graph $G=(V,E)$ with an orientation $\mu$ on its edges. A path $(v_0,e_1,v_1,...,e_k,v_k)$ in $G$ is called an \emph{alternating-directions path}, or shortly an \emph{AD path} if the orientations (according to $\mu$) of every pair of consecutive edges of the path are alternating, i.e., they are both oriented either towards their common vertex, or from their common vertex. Let $P=(v_0,e_1,v_1,...,e_k,v_k)$ be an AD path. If $e_1$ is oriented towards $v_1$, we refer to $v_0$ as an \emph{outgoing endpoint} of $P$. Otherwise, we refer to $v_0$ as an \emph{incoming endpoint} of $P$. Symmetrically, If $e_k$ is oriented towards $v_{k-1}$, we refer to $v_k$ as an outgoing endpoint of $P$, and otherwise, we refer to $v_k$ as an incoming endpoint of $P$. Note that this is possible that both endpoints of $P$ are incoming or both are outgoing. See Figure \ref{fig:alternating-directions path} for an example of an AD path and its endpoints.

\begin{figure}
    \centering
    \begin{tikzpicture}

        \begin{scope}[yshift=3cm]
        \node at (-0.5,2) {$P_1$};

        \node at (0,0.15) {$v$};
        \node at (11,2.35) {$u$};
        
        \node[circle, fill=black] at (0,0.5) (n1) {};
        \node[circle, fill=black] at (1,2) (n2) {};
        \node[circle, fill=black] at (2,0.5) (n3) {};
        \node[circle, fill=black] at (3,2) (n4) {};
        \node[circle, fill=black] at (4,0.5) (n5) {};
        \node[circle, fill=black] at (5,2) (n6) {};
        \node[circle, fill=black] at (6,0.5) (n7) {};
        \node[circle, fill=black] at (7,2) (n8) {};
        \node[circle, fill=black] at (8,0.5) (n9) {};
        \node[circle, fill=black] at (9,2) (n10) {};
        \node[circle, fill=black] at (10,0.5) (n11) {};
        \node[circle, fill=black] at (11,2) (n12) {};

        \draw[->, line width=0.7mm] (n1)--(n2);
        \draw[->, line width=0.7mm] (n3)--(n2);
        \draw[->, line width=0.7mm] (n3)--(n4);
        \draw[->, line width=0.7mm] (n5)--(n4);
        \draw[->, line width=0.7mm] (n5)--(n6);
        \draw[->, line width=0.7mm] (n7)--(n6);
        \draw[->, line width=0.7mm] (n7)--(n8);
        \draw[->, line width=0.7mm] (n9)--(n8);
        \draw[->, line width=0.7mm] (n9)--(n10);
        \draw[->, line width=0.7mm] (n11)--(n10);
        \draw[->, line width=0.7mm] (n11)--(n12);
        
        \end{scope}

        \node at (-0.5,2) {$P_2$};

        \node at (0,0.15) {$w$};
        \node at (10,0.15) {$z$};
        
        \node[circle, fill=black] at (0,0.5) (n1) {};
        \node[circle, fill=black] at (1,2) (n2) {};
        \node[circle, fill=black] at (2,0.5) (n3) {};
        \node[circle, fill=black] at (3,2) (n4) {};
        \node[circle, fill=black] at (4,0.5) (n5) {};
        \node[circle, fill=black] at (5,2) (n6) {};
        \node[circle, fill=black] at (6,0.5) (n7) {};
        \node[circle, fill=black] at (7,2) (n8) {};
        \node[circle, fill=black] at (8,0.5) (n9) {};
        \node[circle, fill=black] at (9,2) (n10) {};
        \node[circle, fill=black] at (10,0.5) (n11) {};

        \draw[->, line width=0.7mm] (n1)--(n2);
        \draw[->, line width=0.7mm] (n3)--(n2);
        \draw[->, line width=0.7mm] (n3)--(n4);
        \draw[->, line width=0.7mm] (n5)--(n4);
        \draw[->, line width=0.7mm] (n5)--(n6);
        \draw[->, line width=0.7mm] (n7)--(n6);
        \draw[->, line width=0.7mm] (n7)--(n8);
        \draw[->, line width=0.7mm] (n9)--(n8);
        \draw[->, line width=0.7mm] (n9)--(n10);
        \draw[->, line width=0.7mm] (n11)--(n10);

        \node at (5.5,-1.5) {- $w$ and $z$ are both outgoing endpoints of $P_2$};
        \node at (5.5,-0.5) {- $v$ is an outgoing endpoint of $P_1$};
        \node at (5.5,-1) {- $u$ is an incoming endpoint of $P_1$};
    \end{tikzpicture}

    \caption{Alternating-directions paths.}
    \label{fig:alternating-directions path}
\end{figure}
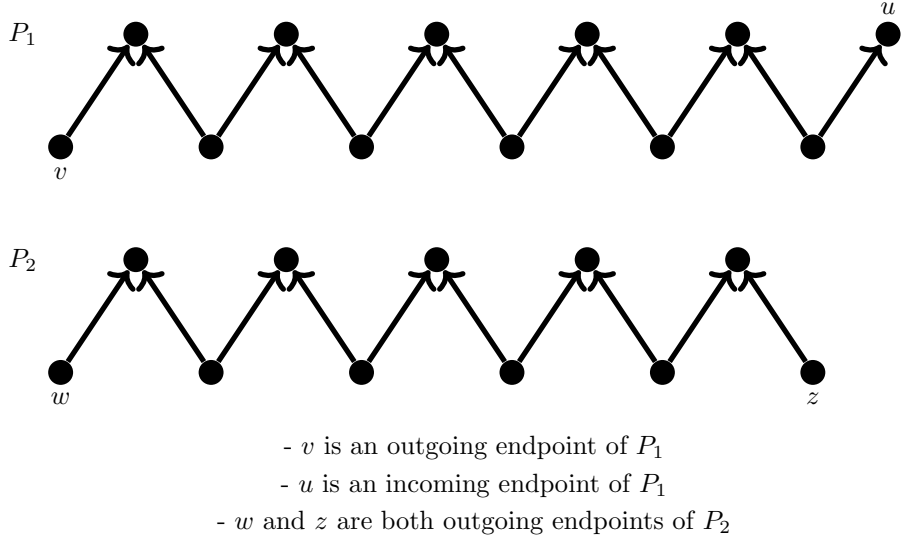

In the alternating-directions paths decomposition problem, we are given an undirected graph $G=(V,E)$ equipped with an orientation $\mu$ on its edges, and we aim to partition its edge set $E$ into a collection of edge-disjoint AD paths (not necessarily simple ones), such that each vertex in the graph is an incoming endpoint of at most $\kappa$ such paths, and an outgoing endpoint of at most $\kappa$ such paths, for some small threshold value $\kappa$. Such a partition is called an \emph{AD paths decomposition} of $(G,\mu)$ with \textit{degree} $\kappa$.\\

Analogously to the undirected case, we refer to an alternating-directions path $P=(v_0,e_1,v_1,...,e_k,$ $v_k)$ as a \textit{maximal AD path} if all the incident edges of $v_0$ that are oriented in the same direction as $e_1$, with respect to $v_0$, are already in $P$, and all the incident edges of $v_k$ that are oriented in the same direction as $e_k$, with respect to $v_k$, are already in $P$, i.e., we cannot add more edges at either endpoint of the path and keep it an alternating-directions path. Note that a maximal AD path might be a cycle. In the case where $P$ is a cycle, i.e., $v_0=v_k$, we refer to the vertex $v_0$ as the \textit{endpoint of the cycle}. Note that if $P$ is an even-length cycle, then every vertex in $P$ can be considered as an endpoint of $v$, while if $P$ is an odd length cycle, only $v_0$ can be considered as its endpoint, since the edges $e_1$ and $e_k$ are oriented in opposite directions with respect to $v_0$ (one edge is oriented towards $v_0$, and the other is oriented from $v_0$) (see Figure \ref{fig:alternating-directions odd cycle} for an example).

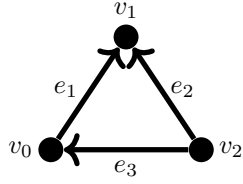
\begin{figure}
    \centering
    \begin{tikzpicture}
        \node[circle, fill=black] at (0,0.5) (n1) {};
        \node[circle, fill=black] at (1,2) (n2) {};
        \node[circle, fill=black] at (2,0.5) (n3) {};
        
        \node at (-0.4,0.5) {$v_0$};
        \node at (1,2.35) {$v_1$};
        \node at (2.4,0.5) {$v_2$};
        
        \node at (0.2,1.3) {$e_1$};
        \node at (1.75,1.3) {$e_2$};
        \node at (1,0.25) {$e_3$};

        \draw[->, line width=0.7mm] (n1)--(n2);
        \draw[->, line width=0.7mm] (n3)--(n2);
        \draw[->, line width=0.7mm] (n3)--(n1);
        
    \end{tikzpicture}

    \caption{An odd-length cycle alternating-directions path $(v_0,v_1,v_2, v_0)$.}
    \label{fig:alternating-directions odd cycle}
\end{figure}

We are now ready to describe the oriented version of the paths decomposition problem. Given a graph $G=(V,E)$ with an orientation $\mu$, we aim to partition the edge set of the graph into a collection of edge-disjoint AD paths (not necessarily simple ones), such that each vertex in the graph is an incoming and an outgoing endpoint in a small number of such paths.\\

Before providing the full description of the AD paths decomposition algorithm, we present some notation. During the algorithm we maintain a set of "active" edges $\mathcal{F}\subseteq E$. At the beginning $\mathcal{F}=E$. We iteratively compute maximal alternating-directions paths, and delete their edges from $\mathcal{F}$. We denote $G_{\mathcal{F}}=(V,\mathcal{F})$. The full description of the algorithm is given in Procedure \textsc{AD Paths Decomposition}:

\begin{algorithm}
  \caption{Procedure \textsc{AD Paths Decomposition} ($G=(V,E),\mu$)\label{Procedure alternating-Directions paths Decomposition}}
  \begin{algorithmic} [1] 
  \Let{$\mathcal{D}$}{$\emptyset$}
  \Let{$\mathcal{F}$}{$E$}
  \While{$\mathcal{F}\neq \emptyset$}
  \State compute a maximal alternating-directions path $P$ in $G_{\mathcal{F}}$ 
  \Let{$\mathcal{D}$}{$\mathcal{D}\cup\{P\}$}
  \Let{$\mathcal{F}$}{$\mathcal{F}\setminus P$\Comment{we delete the edges of $P$}}
  \EndWhile
  \State \textbf{return} $\mathcal{D}$
  \end{algorithmic}
\end{algorithm}

We now analyse the algorithm. We first show that Procedure \textsc{AD Paths Decomposition} computes an alternating-directions paths decomposition $\mathcal{D}$ with degree 1, i.e., each vertex is an incoming and an outgoing endpoint of at most one path in $\mathcal{D}$.

\begin{lemma}[Degree bound for AD paths decomposition]\label{alternating-directions paths decomposition degree bound}
    Let $G=(V,E)$ be an undirected graph and $\mu$ an orientation of its edges. Procedure \textsc{AD Paths Decomposition} partitions the edge set $E$ of the graph into a collection of edge-disjoint AD paths $\mathcal{D}$ such that each vertex $v\in V$ is an incoming and an outgoing endpoint of at most one path in $\mathcal{D}$.
\end{lemma}

\begin{proof}
    First, observe that at the beginning of each iteration of the algorithm, $\mathcal{F}$ consists of all the edges that did not appear in the paths that we have already constructed. Hence, since in each iteration of the algorithm we add to $\mathcal{D}$ an AD directions path in $G_{\mathcal{F}}$, then $\mathcal{D}$ is indeed a partition of $E$ into edge-disjoint AD paths.
    We now show that each vertex $v\in V$ is an incoming endpoint of at most one path in $\mathcal{D}$ (the case of outgoing endpoint is symmetric).
    Assume for a contradiction that there is a vertex $v\in V$ that is an incoming endpoint of at least two paths in $\mathcal{D}$. Denote these paths by $P=(v,e_1,v_1,,...,e_k,v_k)$ and $P'=(v,e'_1,v'_1,...,e'_k,v'_k)$, where $P$ is constructed before $P'$, and $e_1$ and $e_1'$ are both oriented towards $v$. Since $e_1'\in P'$, then $e_1'\notin P$. Hence, since we do not add edges to $\mathcal{F}$, the edge $e_1'$ was in $\mathcal{F}$ at the iteration of the algorithm on which we constructed $P$. Hence, $P$ was not a maximal alternating-directions path, which is a contradiction to the maximality of the path at that stage.
\end{proof}

We next analyse the running time of the algorithm.

\begin{lemma}[A bound on the running time of Procedure \textsc{AD Paths Decomposition}]
    Let $G=(V,E)$ be an undirected graph and $\mu$ an orientation of its edges. Procedure \textsc{AD Paths Decomposition} requires $O(m)$ time.
\end{lemma}

\begin{proof}
    First, observe that the construction of a maximal alternating-directions path $P$ in $G_{\mathcal{F}}$ can be implemented in $O(|P|)$ time. Each vertex can maintain a list of its active incoming edges and a list of its active outgoing edges. Using these lists, a maximal alternating-directions paths can be computed by iteratively choosing the edges of the path, starting at an arbitrary edge $e\in\mathcal{F}$, and extending the path from both endpoints with an active edge oriented in the appropriate direction.
    Hence the overall running time of the algorithm is $\sum_{P\in\mathcal{D}}O(|P|)=O(m)$.
\end{proof}

We summarize this result in the next theorem.

\begin{theorem}[Properties of Procedure \textsc{AD Paths Decomposition}]\label{alternating-directions paths decomposition properties}
    Let $G=(V,E)$ be an $m$-edge undirected graph and $\mu$ an orientation of its edges. Procedure \textsc{AD Paths Decomposition} partitions the edge set $E$ of the graph into a collection of edge-disjoint alternating-directions paths $\mathcal{D}$ such that each vertex $v\in V$ is an incoming and an outgoing endpoint of at most one path in $\mathcal{D}$. The procedure requires $O(m)$ time.
\end{theorem}

Finally, we show how we use an AD paths decomposition in order to construct an oriented degree-splitting. Similarly to the undirected case, we first compute an AD paths decomposition $\mathcal{D}$ of the graph using Procedure \textsc{AD Paths Decomposition}, and then color each path in $\mathcal{D}$ using two alternating colors. We now show that this process produces an oriented degree-splitting of the graph with discrepancy at most 1 in $O(m)$ time.
 
\begin{theorem}[Properties of the oriented degree-splitting algorithm]\label{Directed degree-splitting algorithm properties}
    Let $G=(V,E)$ be an undirected graph and $\mu$ an orientation of its edges.
    Let $\mathcal{D}$ be an AD paths decomposition with degree at most 1. By coloring each path in $\mathcal{D}$ using two alternating colors, we get an oriented degree-splitting $(E_1,E_2)$ of $G$ with discrepancy at most 1.
\end{theorem}

\begin{proof}
    We will show that the incoming discrepancy of each vertex is at most 1 (for outgoing discrepancy the proof is symmetric).
    Denote $\mathcal{D}=\{P_1,P_2,...,P_l\}$, and for each $j\in\{1,2,...,l\}$, denote $P_j=(v_0^j,e_1^j,v_1^j,...,e_{k_j}^j,v_{k_j}^j)$. Let $v\in V$, and denote the set of incoming edges (under $\mu$) incident on $v$ by $E^{\text{in}}_v$. By Theorem \ref{alternating-directions paths decomposition properties}, $v$ is an incoming endpoint of at most one path (or cycle) $P_z\in\mathcal{D}$. 
    We define a set $H^{\text{in}}_v$ of (at most two) edges in the following way: if $v_0^z=v$ and $e_1^z$ is oriented towards $v$, then $e_1^z\in H^{\text{in}}_v$. Also, if $v_{k_z}^z=v$ and $e_{k_z}^z$ is oriented towards $v$, then $e_{k_z}^z\in H^{\text{in}}_v$ too. Note that $H^{\text{in}}_v$ may also be an empty set.
    Let $F^{\text{in}}_v=E^{\text{in}}_v\setminus H^{\text{in}}_v$. We now show that half of the edges in $F^{\text{in}}_v$ are colored by each of the two colors. Observe that all the oriented edges in $F^{\text{in}}_v$ are of the form $\langle v^j_{i-1},v^j_i\rangle$ or $\langle v^j_{i+1},v^j_i\rangle$, where $v=v^j_i$ for $j\in\{1,2,...,l\}$ and $i\in\{1,...,k_j-1\}$. We refer to a pair of edges $\langle v^j_{i-1},v^j_i\rangle$ and $\langle v^j_{i+1},v^j_i\rangle$, where $v=v^j_i$ for $j\in\{1,2,...,l\}$ and $i\in\{1,...,k_j-1\}$, as counterpart edges. Observe that both these edges are in $F^{\text{in}}_v$, and every edge in $F^{\text{in}}_v$ has a unique counterpart in $F^{\text{in}}_v$. Since an edge in $F^{\text{in}}_v$ and its counterpart are consecutive edges in some path of $\mathcal{D}$, they received opposite colors. Hence, half of the edges in $F^{\text{in}}_v$ are colored by each of the two colors. If $\left|H^{\text{in}}_v\right|\leq 1$, we conclude that the incoming discrepancy of $v$ is at most 1. Otherwise, $H^{\text{in}}_v=\left\{e_1^z,e_{k_z}^z\right\}$, which means that $P_z$ is an even-length cycle. Hence $e_1^z$ and $e_{k_z}^z$ received opposite colors, and the incoming discrepancy of $v$ is 0. Overall, the incoming discrepancy of $v$ is at most 1.
\end{proof}

In the next corollary we summarize properties of our degree-splitting algorithm.

\begin{corollary}\label{Directed degree-splitting algorithm properties corollary}
    Let $G=(V,E)$ be an $m$-edge undirected graph and $\mu$ an orientation of its edges.
    Let $\mathcal{D}$ be the AD paths decomposition constructed by Procedure \textsc{AD Paths Decomposition}. By coloring each path in $\mathcal{D}$ using two alternating colors, we get an oriented degree-splitting $(E_1,E_2)$ of $G$ with discrepancy at most 1. The procedure requires $O(m)$ time.
\end{corollary}

\begin{proof}
    By Theorem \ref{alternating-directions paths decomposition properties}, Procedure \textsc{AD Paths Decomposition} computes an AD paths decomposition with degree at most 1. Hence, by Theorem \ref{Directed degree-splitting algorithm properties}, by coloring each path in $\mathcal{D}$ using two alternating colors, we get an oriented degree-splitting of $G$ with discrepancy at most 1.\\

    By Theorem \ref{alternating-directions paths decomposition properties}, the computation of an AD paths decomposition requires $O(m)$ time. The coloring process requires $O(m)$ time as well. Hence the computation of oriented degree-splitting requires $O(m)$ time.
\end{proof}

For $i\in\{-1,1\}$, denote by $\text{indeg}_{i}(v)$ (respectively, $\text{outdeg}_{i}(v)$) the number of incoming (respectively, outgoing) edges incident in $v$ that are colored $i$ in the degree-splitting.
As $|\text{indeg}_{-1}(v)-\text{indeg}_{1}(v)|\leq 1$ and $|\text{outdeg}_{-1}(v)-\text{outdeg}_{1}(v)|\leq 1$, for any $v\in V$, it follows that
\begin{equation}\label{eq4.1 arb}
    \frac{\text{indeg}(v)-1}{2}\leq\text{indeg}_{i}(v)\leq \frac{\text{indeg}(v)+1}{2}
\end{equation}
and
\begin{equation}\label{eq4.2 arb}
    \frac{\text{outdeg}(v)-1}{2}\leq\text{outdeg}_{i}(v)\leq \frac{\text{outdeg}(v)+1}{2},
\end{equation}
for every $i\in\{-1,1\}$.\\

\subsection{An Arboricity-Dependent Edge-Coloring Algorithm}\label{sec: An Arboricity-Dependent Edge-Coloring Algorithm}
In this section we present our arboricity-dependent edge-coloring algorithm.\\
Before describing the algorithm, we first state two known results for $(\Delta+1)$-edge-coloring of graphs with bounded arboricity, due to~\cite{kowalik2024edge,bhattacharya2023density}, which we use as blackbox manner.

\begin{theorem}[$(\Delta+1)$-edge-coloring]\label{Delta+1 col arb}
    Let $G$ be an $n$-vertex $m$-edge graph with maximum degree $\Delta$ and arboricity $\alpha$.
    \begin{enumerate}
        \item[(1)] \cite{kowalik2024edge} There is a deterministic algorithm that computes a proper $(\Delta+1)$-edge-coloring of $G$ in $O(m\cdot\alpha^7\cdot\log n)$ time.
        \item[(2)] \cite{bhattacharya2023density} There is a randomized algorithm that computes a proper $(\Delta+1)$-edge-coloring of $G$ within an expected $O(m\cdot\alpha\cdot\log n)$ time.
    \end{enumerate}
\end{theorem}

Our arboricity-dependent edge-coloring algorithm is an oriented version of Algorithm \ref{Procedure Edge-Coloring} (see Section \ref{sec: Efficient (1+varepsilon)Delta-Edge-Coloring}). Given an $n$-vertex $m$-edge undirected graph $G=(V,E)$, with maximum degree $\Delta$ and arboricity $\alpha$, instead of recursively applying the degree-splitting algorithm (from Section \ref{sec: Efficient (1+varepsilon)Delta-Edge-Coloring}) on the input graph $G$, we first compute a forests-decomposition orientation $\mu$ of $G$, and then apply an oriented version of the edge-coloring algorithm from Section \ref{sec: Efficient (1+varepsilon)Delta-Edge-Coloring}. 
Namely, we are given an input a graph $G=(V,E)$ with an orientation $\mu$ on its edges, and a non-negative integer parameter $h$. The algorithm starts by computing an oriented degree-splitting $(E_1,E_2)$ of $G=(V,E)$ with the orientation $\mu$ (see Theorem \ref{Directed degree-splitting algorithm properties}). Then it defines two subgraphs $G_1$ and $G_2$ of $G$, each on the same vertex set $V$. Their sets of edges are $E_1$ and $E_2$, respectively, in the oriented degree-splitting algorithm from the first step. Then, it recursively computes edge-colorings of each of the subgraphs of $G$, and merges these colorings using disjoint palettes. The parameter $h$ determines the depth of the recursion. At the base case of the algorithm ($h=0$) it computes a proper $(\Delta+1)$-edge-coloring (of the undirected version of the graph) using one of the algorithms from Theorem \ref{Delta+1 col arb}.\\
We present the pseudocode of this algorithm in Procedures \textsc{Oriented Edge-Coloring} and \textsc{Arboricity Edge-Coloring} (Algorithms \ref{Procedure Directed Edge-Coloring} and \ref{Procedure Arboricity Edge-Coloring}).

\begin{algorithm}
  \caption{Procedure \textsc{Oriented Edge-Coloring} ($G=(V,E),\mu,h$)\label{Procedure Directed Edge-Coloring}}
  \begin{algorithmic} [1] 
  \If{$h=0$}
    \LongState{compute a $(\Delta+1)$-edge-coloring $\varphi$ of $G$ using an algorithm from Theorem \ref{Delta+1 col arb}}
    \Else
    \Let{$G_1=(V,E_1),G_2=(V,E_2)$}{\textsc{Oriented Degree-Splitting}($G$)}
    \LongState{compute a coloring $\varphi_1$ of $G_1$ using Procedure \textsc{Oriented Edge-Coloring}($G_1,\mu,h-1$)}
    \LongState{compute a coloring $\varphi_2$ of $G_2$ using Procedure \textsc{Oriented Edge-Coloring}($G_2,\mu,h-1$)}
    \State define coloring $\varphi$ of $G$ by $\varphi(e)=
    \begin{cases}
    \varphi_1(e),& e\in E_1\\
    |\varphi_1|+\varphi_2(e), & e\in E_2 
    \end{cases}$ \myindent{5.5}\Comment{$|\varphi_1|$ is the size of the palette of the coloring $\varphi_1$}
    \EndIf
    \State \textbf{return} $\varphi$
  \end{algorithmic}
\end{algorithm}

\begin{algorithm}
  \caption{Procedure \textsc{Arboricity Edge-Coloring} ($G=(V,E),h$)}\label{Procedure Arboricity Edge-Coloring}
  \begin{algorithmic} [1] 
    \Let{$\mu$}{\textsc{Forests-Decomposition Orientation}($G$)}
    \State \textbf{return} \textsc{Oriented Edge-Coloring}($G$,$\mu$,$h$)
  \end{algorithmic}
\end{algorithm}

We now proceed to the analysis of these algorithms.

We first present some notation that we will use for the analysis. Let $G=(V,E)$ be an $m$-edge graph with maximum (undirected) degree $\Delta$ and arboricity $\alpha$, an orientation $\mu$ of its edges, and $h\leq\log\alpha$ be an input for Procedure \textsc{Oriented Edge-Coloring}, and let $0\leq i\leq h$. We will use the notation $G^{(i)}$ for a graph that is computed after $i$ levels of recursion. We denote its number of edges by $m^{(i)}$, its maximum (undirected) degree by $\Delta^{(i)}$, and its arboricity by $\alpha^{(i)}$. For a vertex $v\in V$, we denote by $\text{indeg}^{(i)}(v)$ and $\text{outdeg}^{(i)}(v)$ the incoming and outgoing degrees of $v$ in such graph $G^{(i)}$, respectively (for example $G^{(0)}=G$, $m^{(0)}=m$, $\Delta^{(0)}=\Delta$, $\alpha^{(0)}=\alpha$, $\text{indeg}^{(0)}(v)=\text{indeg}(v)$ and $\text{outdeg}^{(0)}(v)=\text{outdeg}(v)$, for each $v\in V$). The orientation $\mu$ induces an orientation to every subgraph of $G$. We keep using the notation $\mu$ for the orientation that $\mu$ induces on these subgraphs.\\

Important parameters that affect the performance of our algorithm are the incoming and outgoing degrees of each vertex $v\in V$ on different levels of recursion. Therefore, we start the analysis by bounding these degrees. The analysis is similar to that in Lemma \ref{deg bound}.

\begin{lemma}[Maximum oriented degrees of recursive graphs]\label{directed deg bound}
    Let $G=(V,E)$ be a graph with arboricity $\alpha$, $\mu$ an orientation of its edges with out-degree at most $2\alpha$, and $h\leq\log\alpha$ be an input for Procedure \textsc{Oriented Edge-Coloring}, and let $0\leq i\leq h$. For each $v\in V$, the incoming degree $\mathrm{indeg}^{(i)}(v)$, and the outgoing degree $\mathrm{outdeg}^{(i)}(v)$ of $v$ in a graph $G^{(i)}$ that was computed after $i$ levels of recursion satisfy $$\frac{\mathrm{indeg}(v)}{2^i}-1\leq\mathrm{indeg}^{(i)}(v)\leq\frac{\mathrm{indeg}(v)}{2^i}+1$$ and $$\frac{\mathrm{outdeg}(v)}{2^i}-1\leq\mathrm{outdeg}^{(i)}(v)\leq\frac{\mathrm{outdeg}(v)}{2^i}+1.$$
\end{lemma}

\begin{proof}
    Let $v\in V$. We prove the lemma for the incoming degree of $v$ (the proof of the outgoing degree of $v$ is symmetric).
    The proof is by induction on $i$:\\
    For $i=0$, we have $G^{(0)}=G$, and indeed the incoming degree of $v$ in $G$ satisfies $\mathrm{indeg}(v)-1\leq\mathrm{indeg}(v)\leq\mathrm{indeg}(v)+1$.\\
    For $i>0$, let $G^{(i)}$ be a graph on which the procedure is invoked on the $i$'th level of the recursion, and let $G^{(i-1)}$ be the graph which we splitted in the previous level of recursion in order to create the graph $G^{(i)}$. 
    By the induction hypothesis and by Equation (\ref{eq4.1 arb}) (and Equation (\ref{eq4.2 arb}), for outgoing degree), we have
    $$\mathrm{indeg}^{(i)}(v)\leq\frac{\mathrm{indeg}^{(i-1)}(v)+1}{2}\leq\frac{\frac{\mathrm{indeg}(v)}{2^{i-1}}+1+1}{2}=\frac{\mathrm{indeg}(v)}{2^i}+1,$$
    and
    $$\mathrm{indeg}^{(i)}(v)\geq \frac{\mathrm{indeg}^{(i-1)}(v)-1}{2}\geq \frac{\frac{\mathrm{indeg}(v)}{2^{i-1}}-1-1}{2}= \frac{\mathrm{indeg}(v)}{2^i}-1.$$
\end{proof}

Using this lemma, we now derive a bound on the maximum (undirected) degree $\Delta^{(i)}$ of the subgraphs $G^{(i)}$.

\begin{corollary}[A bound on the maximum (undirected) degree]\label{maximum deg bound}
    Let $G=(V,E)$ be a graph with maximum (undirected) degree $\Delta$ and arboricity $\alpha$, $\mu$ be an orientation of its edges with out-degree at most $2\alpha$, and $h\leq\log\alpha$ be an input for Procedure \textsc{Oriented Edge-Coloring}, and let $0\leq i\leq h$. The maximum (undirected) degree $\Delta^{(i)}$ of (the undirected version of) a graph $G^{(i)}$ that was computed after $i$ levels of recursion satisfies $\Delta^{(i)}\leq\frac{\Delta}{2^i}+2$.
\end{corollary}

\begin{proof}
    The proof is straightforward from Lemma \ref{directed deg bound}. For each $v\in V$, the degrees of $v$ in $G^{(i)}$ and in $G$ satisfy $\deg_{G^{(i)}}(v)=\mathrm{indeg}^{(i)}(v)+\mathrm{outdeg}^{(i)}(v)$, and $\deg_{G}(v)=\mathrm{indeg}(v)+\mathrm{outdeg}(v)$, respectively. Hence, $\deg_{G^{(i)}}(v)\leq\frac{\deg_{G}(v)}{2^i}+2$, which means that $\Delta^{(i)}\leq\frac{\Delta}{2^i}+2$.
\end{proof}

Another important property is that the arboricity, $\alpha^{(i)}$, of the subgraphs $G^{(i)}$, decreases exponentially as well. We prove this in the next lemma.

\begin{lemma}[A bound on the arboricity]\label{arboricity bound}
    Let $G=(V,E)$ be a graph with arboricity $\alpha$ and $h\leq\log\alpha$ be an input for Procedure \textsc{Arboricity Edge-Coloring}, and let $0\leq i\leq h$. The arboricity $\alpha^{(i)}$ of a graph $G^{(i)}$ that was computed after $i$ levels of recursion (of the invocation of Procedure \textsc{Oriented Edge-Coloring} in line 2 of Algorithm \ref{Procedure Arboricity Edge-Coloring}) satisfies $\alpha^{(i)}\leq\frac{\alpha}{2^{i-1}}+1$.
\end{lemma}

\begin{proof}
    By Theorem \ref{Forests-Decomposition Orientation properties}, the maximum outgoing degree of the orientation $\mu$, produced by Procedure \textsc{Forests-Decomposition Orientation} in line 1 of Procedure \textsc{Arboricity Edge-Coloring} (see Algorithm \ref{Procedure Arboricity Edge-Coloring}), is at most $2\alpha$. Hence, by Lemma \ref{directed deg bound}, the outgoing degree, $\mathrm{outdeg}^{(i)}(v)$ of $v$, in a graph $G^{(i)}$ that was computed after $i$ levels of recursion (of Procedure \textsc{Oriented Edge-Coloring} in line 2), satisfies $$\mathrm{outdeg}^{(i)}(v)\leq\frac{\mathrm{outdeg}(v)}{2^i}+1\leq\frac{2\alpha}{2^i}+1=\frac{\alpha}{2^{i-1}}+1.$$ 
    For each vertex $v\in V$, we label its outgoing edges in $G^{(i)}$ using the indexes $\left\{1,2,...,\frac{\alpha}{2^{i-1}}+1\right\}$. Let $(v_1,v_2,...,v_n)$ be the order that the vertices in $G$ become inactive during the execution of Procedure \textsc{Forests-Decomposition Orientation} (vertices become inactive when they leave the set $\mathcal{A}$ of active vertices on line 2 of Algorithm \ref{Procedure Forests-Decomposition Orientation}). Observe that there are no edges of the form $\langle v_i,v_j\rangle$ for $i>j$. Hence the orientation $\mu$ is acyclic. Therefore, for every index $i\in\left\{1,2,...,\frac{\alpha}{2^{i-1}}+1\right\}$, the set of edges labeled $i$ is a forest. Thus, the edge set of $G^{(i)}$ can be split into $\frac{\alpha}{2^{i-1}}+1$ edge-disjoint forests. Hence its arboricity $\alpha^{(i)}$ satisfies $\alpha^{(i)}\leq\frac{\alpha}{2^{i-1}}+1$.
    \end{proof}

We next show that the coloring produced by Procedure \textsc{Oriented Edge-Coloring} is a proper edge-coloring, and bound the number of colors that it uses.
\begin{lemma}[Proper edge-coloring]\label{proper edge-coloring arb}
    Let $G=(V,E)$ be a graph with maximum (undirected) degree $\Delta$ and arboricity $\alpha$, $\mu$ an orientation of its edges with out-degree at most $2\alpha$, and $h\leq\log\alpha$ be an input for Procedure \textsc{Oriented Edge-Coloring}. Procedure \textsc{Oriented Edge-Coloring} computes a proper $\left(\Delta+3\cdot 2^h\right)$-edge-coloring of $G$.
\end{lemma}

By Corollary \ref{maximum deg bound}, the maximal degree of the recursive subgraphs can be bounded by the same bounds as in the undirected algorithm from Section \ref{sec: Analysis of the Algorithm} (Lemma \ref{deg bound}). Hence the proof of Lemma \ref{proper edge-coloring arb} is identical to that of Lemma \ref{deg bound}, and is therefore omitted.\\

Next, we use this lemma to conclude that also Procedure \textsc{Arboricity Edge-Coloring} computes a proper $\left(\Delta+3\cdot 2^h\right)$-edge-coloring of $G$.

\begin{corollary}[Proper edge-coloring]
    Let $G=(V,E)$ be a graph with maximum degree $\Delta$ and arboricity $\alpha$, and $h\leq\log\alpha$ be an input for Procedure \textsc{Arboricity Edge-Coloring}. Procedure \textsc{Arboricity Edge-Coloring} computes a proper $\left(\Delta+3\cdot 2^h\right)$-edge-coloring of $G$.
\end{corollary}

We next proceed to the analysis of the running time of Procedure \textsc{Oriented Edge-Coloring}. Observe that the running time of step 2 of Procedure \textsc{Oriented Edge-Coloring} is of the form $m\cdot F(\alpha,n)$ for some function $F$. Hence by Lemma \ref{time bound} (from Section \ref{sec: Analysis of the Algorithm}, when changing the role of $\Delta$ with $\alpha$), Lemma \ref{arboricity bound}, and by the linearity of the expected value, we conclude the following bound for the (expected) running time of Procedure \textsc{Oriented Edge-Coloring}.

\begin{lemma}[Running time of Procedure \textsc{Oriented-Edge-Coloring}]\label{time bound arb}
    Let $G=(V,E)$ be an $n$-vertex, $m$-edge graph with arboricity $\alpha$, $\mu$ be an orientation of its edges with out-degree at most $2\alpha$, and $h\leq\log\alpha$ be an input for Procedure \textsc{Oriented Edge-Coloring}. If step 2 of Procedure \textsc{Oriented Edge-Coloring} (line 2 of Algorithm \ref{Procedure Directed Edge-Coloring}) requires $O(m\cdot\alpha^c\cdot\log n)$ deterministic (respectively, expected) time for some constant $c>0$, then Procedure \textsc{Oriented Edge-Coloring} requires $O\left(\frac{m\cdot\alpha^c\cdot\log n}{2^{ch}}\right)$ deterministic (resp., expected) time.
\end{lemma}

From this lemma, we now conclude that Procedure \textsc{Arboricity Edge-Coloring} (see Algorithm \ref{Procedure Arboricity Edge-Coloring}) requires $O\left(\frac{m\cdot\alpha^c\cdot\log n}{2^{ch}}\right)$ time, for an appropriate constant $c>0$, as well.

\begin{corollary}[Running time of Procedure \textsc{Arboricity-Edge-Coloring}]
    Let $G=(V,E)$ be an $n$-vertex, $m$-edge graph with arboricity $\alpha$, and $h\leq\log\alpha$ be an input for Procedure \textsc{Arboricity Edge-Coloring}. Then Procedure \textsc{Arboricity Edge-Coloring} requires $O\left(\frac{m\cdot\alpha^7\cdot\log n}{2^{7h}}\right)$ deterministic time. Its randomized variant requires $O\left(\frac{m\cdot\alpha\cdot\log n}{2^{h}}\right)$ expected time.
\end{corollary}

\begin{proof}
    By Theorem \ref{Forests-Decomposition Orientation properties}, step 1 of Procedure \textsc{Arboricity Edge-Coloring} requires $O(n+m)$ deterministic time. For the deterministic algorithm, by Lemma \ref{time bound arb} and Theorem \ref{Delta+1 col arb}(1), step 2 requires $O\left(\frac{m\cdot\alpha^7\cdot\log n}{2^{7h}}\right)$ deterministic time. Hence, since $h\leq\log\alpha$, Procedure \textsc{Arboricity Edge-Coloring} requires $$O\left(\frac{m\cdot\alpha^7\cdot\log n}{2^{7h}}+n+m\right)=O\left(\frac{m\cdot\alpha^7\cdot\log n}{2^{7h}}\right)$$ deterministic time. For the randomized variant of algorithm, by Lemma \ref{arboricity bound}, Lemma \ref{time bound arb}, and Theorem \ref{Delta+1 col arb}(2), the expected running time of step 2 is $O\left(\frac{m\cdot\alpha\cdot\log n}{2^{h}}\right)$ time. Hence, since $h\leq\log\alpha$, the expected running time of Procedure \textsc{Arboricity Edge-Coloring} is $O\left(\frac{m\cdot\alpha\cdot\log n}{2^{h}}+n+m\right)=O\left(\frac{m\cdot\alpha\cdot\log n}{2^{h}}\right)$ time.
\end{proof}

We are now ready to summarize the main properties of Procedure \textsc{Arboricity Edge-Coloring}.\\
\begin{theorem}[Properties of Procedure \textsc{Arboricity Edge-Coloring}]\label{arboricity edge-coloring properties}
    Let $G=(V,E)$ be an $n$-vertex, $m$-edge graph with maximum degree $\Delta$ and arboricity $\alpha$. Procedure \textsc{Arboricity Edge-Coloring} with parameter $h\leq\log\alpha$ computes a proper $\left(\Delta+3\cdot 2^h\right)$-edge-coloring of $G$ in $O\left(\frac{m\cdot\alpha^7\cdot\log n}{2^{7h}}\right)$ deterministic time. Its randomized version requires $O\left(\frac{m\cdot\alpha\cdot\log n}{2^{h}}\right)$ expected time.
\end{theorem}

Note that when $h=0$, the number of colors assigned by the Procedure \textsc{Arboricity Edge-Coloring} is actually $\Delta+1$ (and not $\Delta+3$, as in the bound in Theorem \ref{arboricity edge-coloring properties}).
To get a more simple and useful trade-off, we set $h\approx\log\left(\varepsilon\cdot\alpha\right)$.

\begin{theorem}[Trade-off]\label{trade-off}
    Let $G=(V,E)$ be an $n$-vertex, $m$-edge graph with maximum degree $\Delta$ and arboricity $\alpha$, and let $\frac{1}{\alpha}\leq\varepsilon<1$. Procedure \textsc{Arboricity Edge-Coloring} with parameter $h=\max\left\{\left\lfloor\log\left(\frac{\varepsilon\cdot\alpha}{3}\right)\right\rfloor,0\right\}$ computes a proper $(\Delta+\varepsilon\cdot\alpha)$-edge-coloring of $G$ in $O\left(\frac{m\cdot\log n}{\varepsilon^7}\right)$ deterministic time. Its randomized version requires $O\left(\frac{m\cdot\log n}{\varepsilon}\right)$ expected time.
\end{theorem}

\newpage

\bibliographystyle{alpha}
\bibliography{bibliography.bib}

\appendix

\section{Forests-Decomposition Orientation}\label{app: Forests-Decomposition Orientation}
In this appendix we present a well known procedure for computing a forests-decomposition into $2\alpha$ forests.\\

Let $G=(V,E)$ be a graph with arboricity $\alpha$.
We begin by presenting the idea behind Procedure \textsc{Forests-Decomposition Orientation}, that computes a forests-decomposition orientation with degree $2\alpha$.
An important observation for this algorithm, is that the average degree of every induced subgraph of $G$ is at most $2\alpha$. Using this property, we can maintain a set "active" vertices $\mathcal{A}$, that consists of the vertices in the graph we did not process yet. At the beginning, $\mathcal{A}=V$, and iteratively we process a vertex $v\in\mathcal{A}$ with the smallest degree in the induced subgraph $G[\mathcal{A}]$. This vertex has degree (in $G[\mathcal{A}]$) at most $2\alpha$, and we orient its incident edges towards theirs other endpoint (note that the edges that are not in $G[\mathcal{A}]$ are already oriented). Before we provide the pseudocode of the algorithm, we present some notations that we use. Let $\mathcal{A}\subseteq V$, and $v\in \mathcal{A}$. We denote by $\deg_{\mathcal{A}}(v)$ the \textit{induced degree} of $v$ in $G[\mathcal{A}]$, and by $N_{\mathcal{A}}(v)$ the \textit{induced neighbors} of $v$ in $G[\mathcal{A}]$. We denote $G[\mathcal{A}]=(\mathcal{A},E[\mathcal{A}])$. The full description of the algorithm is given in Procedure \textsc{Forests-Decomposition Orientation}:

\begin{algorithm}
  \caption{Procedure \textsc{Forests-Decomposition Orientation} ($G=(V,E)$)\label{Procedure Forests-Decomposition Orientation}}
  \begin{algorithmic} [1] 
  \Let{$\mathcal{A}$}{$V$}
  \While{$\mathcal{A}\neq\emptyset$}
  \State let $v\in\mathcal{A}$ with smallest $\deg_{\mathcal{A}}(v)$
  \For{each $u\in N_{\mathcal{A}}(v)$}
  \State $\mu(v,u)=\langle v,u\rangle$
  \EndFor
  \Let{$\mathcal{A}$}{$\mathcal{A}\setminus\{v\}$}
  \EndWhile
  \State \textbf{return} $\mu$
  \end{algorithmic}
\end{algorithm}

We now analyse the algorithm. We first present a lemma that bounds the minimal degree in $G[\mathcal{A}]$.

\begin{lemma}[Minimal degree]\label{minimal degree} Let $G=(V,E)$ a graph with arboricity $\alpha$, and let $\mathcal{A}\subseteq V$. There is a vertex $v\in\mathcal{A}$ such that $\deg_{\mathcal{A}}(v)\leq 2\alpha$.
\end{lemma}

\begin{proof}
    Assume for a contradiction that for each $v\in V$, we have $\deg_{\mathcal{A}}(v)>2\alpha$. Then we get $$2\left|E[\mathcal{A}]\right|=\sum_{v\in \mathcal{A}}\deg_{\mathcal{A}}(v)>|\mathcal{A}|\cdot2\alpha\geq|\mathcal{A}|\cdot 2\cdot\frac{|E[\mathcal{A}]|}{|\mathcal{A}|-1}>2|E[\mathcal{A}]|,$$ which is a contradiction.
\end{proof}

We are now ready to derive the next corollary that bounds the maximal outdegree of the constructed orientation.

\begin{corollary}[A bound on the outdegree]
    Let $G=(V,E)$ be a graph with arboricity $\alpha$. Procedure \textsc{Forests-Decomposition Orientation} produces an orientation $\mu$ of the edge set $E$ with maximum outdegree at most $2\alpha$.
\end{corollary}

\begin{proof}
    Let $v\in V$, and let $\mathcal{A}_v$ be the set $\mathcal{A}$ at the beginning of the iteration where $v$ is chosen in step 3.
    By Lemma \ref{minimal degree}, since $v$ is a vertex in $\mathcal{A}_v$ with minimal degree in $G[\mathcal{A}_v]$, we have $\deg_{\mathcal{A}_v}(v)\leq 2\alpha$. Since all the outgoing edges of $v$ in $\mu$ are defined only in this iteration of the main loop, and since only the incident edges of $v$ in $G[\mathcal{A}_v]$ are oriented from $v$ in this iteration, the outdegree of $v$ in $\mu$ is at most $\deg_{\mathcal{A}_v}(v)\leq 2\alpha$.
\end{proof}

Finally, we present an implementation of the algorithm, and analyse its running time.

\begin{lemma}[A bound on the running time]
    Let $G=(V,E)$ be an $n$-vertex $m$-edge graph. Procedure \textsc{Forests-Decomposition Orientation} requires $O(n+m)$ time.
\end{lemma}

\begin{proof}
    The implementation of this algorithm is done by using an array $A$ of size $n$, where in each entry $i\in\{0,1,2,...,n-1\}$, we store a list of the vertices with degree $i$ in $G[\mathcal{A}]$. During the algorithm, we also maintain for each vertex its degree, $\deg_{\mathcal{A}}(v)$, in $G[\mathcal{A}]$. In addition we save a variable $s$, denoting the first non-empty index of the array. The initialization of $A$, $s$ and the degrees of the vertices requires $O(m)$ time. In each iteration of the algorithm we pick the first vertex $v$ in $A[s]$ (representing a vertex with minimum degree in the subgraph $G[\mathcal{A}]$), orient all the incident edges of $v$ in $G[\mathcal{A}]$. For each neighbor $u\in N_{\mathcal{A}}(v)$, we delete $u$ from $A[\deg_{\mathcal{A}}(u)]$, add it to the list in $A[\deg_{\mathcal{A}}(u)-1]$, and update $\deg_{\mathcal{A}}(u)=\deg_{\mathcal{A}}(u)-1$. 
    Observe that in each iteration, the minimum degree of $G[\mathcal{A}]$ is either decreases by 1, or increases. Hence all the updates of the variable $s$ require $O(m+n)$ time. Overall, this algorithm requires $O(n+m)$ time.
\end{proof}

We summarize this result in the following theorem:

\ForestsDecompositionOrientation*

For our main arboricity-dependent edge-coloring algorithm, that we describe in Section \ref{sec: An Arboricity-Dependent Edge-Coloring Algorithm}, we first compute the above orientation, and then continue the rest of the algorithm on the oriented version of the graph.

\section{Some Proofs from Section \ref{sec: Efficient (1+varepsilon)Delta-Edge-Coloring}}\label{App: Some Proofs from Section 3}
In this appendix we provide some proofs that were omitted from Section \ref{sec: Efficient (1+varepsilon)Delta-Edge-Coloring}.

\begin{proof}[Proof of Theorem \ref{Procedure Degree-Splitting}]
    Without loss of generality we assume that the graph $G$ is connected. Otherwise, we will apply the algorithm on each connected component separately.\\
    Define a graph $G'=(V',E')$, where $V'=V\cup\{x\}$, for a dummy node $x\notin V$, and $E'=E\cup\{(x,v)\;|\;v\in V, \text{ and $\deg_G(v)$ is odd}\}$. Observe that since we increased by 1 only the degree of the odd degree vertices, then for any $v\in V$, its degree in $G'$, $\deg_{G'}(v)$, is even. In addition, since $\sum_{v\in V}\deg_G(v)=2\cdot|E|$, the number of odd degree vertices in $G$ must be even. Hence the degree of $x$ (in $G'$) is even as well. So we get that all the degrees in $G'$ are even, and we can compute an Eulerian cycle in $G'$ using a simple $O(m)$ time algorithm. Next, starting from the dummy node $x$ (if $x$ is isolated in $G'$ we will start from an arbitrary other vertex), we will color the edges of the cycle in two alternating colors, red and blue, except maybe the first and the last edges that we color, that might have the same color (if the cycle is of odd length) in $O(m)$ time. In that way, the discrepancy of each vertex in $G'$ is either 2 (if we started coloring the cycle from this vertex and the cycle is of odd length), or 0 (otherwise).\\
    After removing $x$ (and the edges it is incident on) from $G'$, the discrepancy of its neighbors will become 1, and the discrepancy of the other vertices will not change. So we get a degree-splitting of $G$ with discrepancy at most 2 in $O(m)$ time.\\
    \\Observe that after the coloring of $G'$, without loss of generality, there are $\left\lceil\frac{|E'|}{2}\right\rceil$ edges colored red, and $\left\lfloor\frac{|E'|}{2}\right\rfloor$ edges colored blue. Hence if $x$ is isolated in $G'$, we will get that there are $\left\lceil\frac{m}{2}\right\rceil$ red edges, and $\left\lfloor\frac{m}{2}\right\rfloor$ blue edges. Otherwise, we consider two cases:
    \begin{itemize}
        \item When $m$ is even, we delete the same amount of blue and red edges from $G'$ (when we remove $x$ and its incident edges). Hence we will get that there are $\left\lceil\frac{m}{2}\right\rceil$ red edges, and $\left\lfloor\frac{m}{2}\right\rfloor$ blue edges. 
        \item Otherwise, if $m$ is odd, we delete from $G'$ two more red edges than blue edges (when we remove $x$ and its incident edges). Hence we will get that there are $\left\lceil\frac{m}{2}\right\rceil$ blue edges, and $\left\lfloor\frac{m}{2}\right\rfloor$ red edges in $G$. 
    \end{itemize}
\end{proof}

\begin{proof}[Proof of Lemma \ref{proper edge-coloring}]
    We will start by showing that the coloring $\varphi$ is proper by induction on $h$:\\
    For $h=0$, by the assumption the coloring $\varphi$ is a proper $(\Delta+1)$-edge-coloring.\\
    For $h>0$, let $e_1\neq e_2\in E$. By the induction hypothesis, the colorings $\varphi_1$ and $\varphi_2$ that were defined on lines 5 and 6 of Algorithm \ref{Procedure Edge-Coloring} are proper edge-colorings.\\
    If $e_1,e_2\in E_1$, then $\varphi(e_1)=\varphi_1(e_1)\neq\varphi_1(e_2)=\varphi(e_2)$.\\
    If $e_1,e_2\in E_2$, then $\varphi(e_1)=|\varphi_1|+\varphi_2(e_1)\neq|\varphi_1|+\varphi_2(e_2)=\varphi(e_2)$.\\
    Otherwise, without loss of generality, $e_1\in E_1$, $e_2\in E_2$, and $$\varphi(e_1)=\varphi_1(e_1)\leq|\varphi_1|<|\varphi_1|+\varphi_2(e_2)=\varphi(e_2).$$
    Hence $\varphi$ is a proper edge-coloring.\\
    Next, we analyse the number of colors used by Procedure \textsc{Edge-Coloring}.\\
    Let $G^{(i)}$ be a graph computed after $i$ levels of recursion with maximum degree $\Delta^{(i)}$. 
    Denote by $f(i)$ the number of colors used by the algorithm on the graph $G^{(i)}$. We prove by induction on $i$ that $$f(i)\leq 2^{h-i}\cdot\left(\frac{\Delta}{2^h}+3\right).$$
    For $i=h$, by the assumption the algorithm uses $f(h)=\Delta^{(h)}+1$ colors, and by Claim \ref{deg bound}, $$f(h)=\Delta^{(h)}+1\leq\left(\frac{\Delta}{2^h}+2\right)+1=\frac{\Delta}{2^h}+3.$$
    For $i<h$, we merge two disjoint palettes of size $f(i+1)$ each. Hence, by the induction hypothesis we have $$f(i)= 2\cdot f(i+1)\leq 2\cdot2^{h-i-1}\cdot\left(\frac{\Delta}{2^h}+3\right)=2^{h-i}\cdot\left(\frac{\Delta}{2^h}+3\right).$$
    Therefore, the number of colors employed by the algorithm on the graph $G=G^{(0)}$ is at most $f(0)\leq 2^h\cdot\left(\frac{\Delta}{2^h}+3\right)=\Delta+3\cdot 2^h$.
\end{proof}

\begin{proof}[Proof of Claim \ref{Proper edge-coloring - randomized sequential}]
    First, observe that the only randomized parts of the algorithm are the executions of the algorithm from Theorem \ref{Delta+1 col}(2) in the base cases of the main algorithm (when $h=0$, and there are at least $(c+1)\cdot\log n$ vertices in the graph). 
    Since in each level of recursion we split the graph into two subgraphs and continue with each of them to the next level of recursion, and because we have $h+1$ such levels, we have at most $2^h$ base cases in which we execute the algorithm from Theorem \ref{Delta+1 col}(2). Hence by Lemma \ref{proper edge-coloring}, if all of these executions succeed, then the main algorithm produces a proper $\left(\Delta+3\cdot 2^h\right)$-edge-coloring. We first bound the probability that there exist at least one execution that fails.    
    By Theorem \ref{Delta+1 col}(2), the algorithm from this theorem fails with probability at most $\frac{1}{\left(\Delta^{(h)}\right)^{n^{(h)}}}$. Recall that we apply the algorithm from Theorem \ref{Delta+1 col}(2) when $n^{(h)}\geq (c+1)\cdot\log n$. Hence, by the union bound, by Claim \ref{deg bound}, and since $h\leq \log\Delta-2$, the probability that at least one of the executions fails is at most $$2^h\cdot \frac{1}{\left(\Delta^{(h)}\right)^{n^{(h)}}}\leq\frac{2^h}{\left(\frac{\Delta}{2^h}-2\right)^{(c+1)\cdot\log n}}\leq\frac{2^h}{2^{(c+1)\cdot\log n}}\leq \frac{\Delta}{n^{c+1}}\leq \frac{1}{n^{c}}.$$\\
    Therefore, the probability that the algorithm succeeds is at least $1-\frac{1}{n^{c}}$.
\end{proof}

\begin{proof}[Proof of Lemma \ref{time bound}]
    Let $G^{(i)}$ be a graph with $m^{(i)}$ edges computed after $i$ levels of recursion. By Theorem \ref{Procedure Degree-Splitting}, line 4 and line 7 (of Algorithm \ref{Procedure Edge-Coloring}) require $O\left(m^{(i)}\right)$ time each. Let $c_0$ be the constant such that lines 4 and 7 together require $c_0\cdot m^{(i)}$ time.\\
    Denote by $T\left(i, m^{(i)}\right)$ the running time that Algorithm \ref{Procedure Edge-Coloring} spends on its $i$-level recursive invocation on the subgraph $G^{(i)}$. (Below $n^{(i)}$ is the number of vertices in $G^{(i)}$, $m^{(i)}$ is the number of edges in $G^{(i)}$ and $\Delta^{(i)}$ is its maximum degree.)\\
    We first prove by induction on $i$ that $$T\left(i, m^{(i)}\right)\leq m^{(i)}\cdot F\left(\frac{\Delta}{2^h}+2, n^{(h)}\right) +c_0\cdot (h-i)\cdot m^{(i)}.$$
    For $i=h$, by our assumption and by Claim \ref{deg bound}, the algorithm requires $$T\left(h, m^{(h)}\right)=m^{(h)}\cdot F\left(\Delta^{(h)}, n^{(h)}\right)\leq m^{(h)}\cdot F\left(\frac{\Delta}{2^h}+2, n^{(h)}\right).$$
    For $i<h$, lines 4 and 7 require together $c_0\cdot m^{(i)}$ time. Denote by $m^{(i+1)}_1$ and $m^{(i+1)}_2$ the number of edges in $G_1$ and $G_2$ (defined on line 4), respectively. Then line 5 requires $T\left(i+1,m^{(i+1)}_1\right)$ time and line 6 requires $T\left(i+1,m^{(i+1)}_2\right)$ time. Hence $$T\left(i,m^{(i)}\right)=T\left(i+1,m^{(i+1)}_1\right)+T\left(i+1,m^{(i+1)}_2\right)+c_0\cdot m^{(i)}.$$
    By the induction hypothesis, and since $m^{(i+1)}_1+m^{(i+1)}_2=m^{(i)}$, we have
    \begin{align*}
        T\left(i,m^{(i)}\right)&=T\left(i+1,m^{(i+1)}_1\right)+T\left(i+1,m^{(i+1)}_2\right)+c_0\cdot m^{(i)}\leq\\
        &\leq m^{(i+1)}_1\cdot F\left(\frac{\Delta}{2^h}+2, n^{(h)}\right)+c_0\cdot m^{(i+1)}_1\cdot (h-i-1)+\\        &+m^{(i+1)}_2\cdot F\left(\frac{\Delta}{2^h}+2, n^{(h)}\right)+c_0\cdot m^{(i+1)}_2\cdot (h-i-1)+c_0\cdot m^{(i)}=\\
        &=m^{(i)}\cdot F\left(\frac{\Delta}{2^h}+2, n^{(h)}\right)+c_0\cdot m^{(i)}\cdot(h-i)
    \end{align*}
    Therefore, the algorithm requires $T(0,m)=O\left(m\cdot F\left(\frac{\Delta}{2^h}+2, n^{(h)}\right)+m\cdot h\right)$ time.
\end{proof}

\end{document}